\newtheorem*{problem*}{Problem}
\def\bH{{\bf H}}
\def\bA{{\bf A}}
\def\bS{{\bf S}}
\def\bX{{\bf X}}
\def\bB{{\bf B}}
\def\bQ{{\bf Q}}
\def\bU{{\bf U}}
\def\bP{{\bf P}}
\def\rT{{\rm T}}
\def\ri{{\rm i}}
\def\rd{{\rm d}}
\def\rR{{\rm R}}
\def\rI{{\rm I}}
\def\cO{\mathcal{O}}
\def\bC{\mathbf{C}}
\def\poly{\mathrm{poly}}
\def\one{{\mathchoice {\rm 1\mskip-4mu l} {\rm 1\mskip-4mu l} {\rm
1\mskip-4.5mu l} {\rm 1\mskip-5mu l}}}
\def\dqc1{\textsc{DQC1}}
\newcommand{\dket}[1]{|#1 \rrangle}
\newtheorem*{rep@theorem}{\rep@title}
\newcommand{\newreptheorem}[2]{%
\newenvironment{rep#1}[1]{%
 \def\rep@title{#2 \ref{##1}}%
 \begin{rep@theorem}}%
 {\end{rep@theorem}}}
\theoremstyle{mytheorem}
\newtheorem{theorem}{Theorem}
\newtheorem{lemma}{Lemma}
\begin{document}

\title{Quantum algorithm for linear matrix equations}

\author{Rolando D. Somma}
\email{rsomma@google.com}
\affiliation{Google Quantum AI, Venice, CA 90291, United States}

\author{Guang Hao Low}
\affiliation{Google Quantum AI, Venice, CA 90291, United States}

\author{Dominic W. Berry}
\affiliation{School of Mathematical and Physical Sciences, Macquarie University, Sydney, NSW 2109, Australia}

\author{Ryan Babbush}
\affiliation{Google Quantum AI, Venice, CA 90291, United States}

\date{\today}

\begin{abstract}
We describe an efficient quantum algorithm for solving the linear matrix equation 
${\bf A} {\bf X} + {\bf X} {\bf B} = {\bf C}$, where ${\bf A}$, ${\bf B}$ and ${\bf C}$ are given complex matrices and $\bX$ is unknown. 
This is known as the Sylvester equation, a fundamental equation with applications in control theory and physics. Our approach constructs the solution  
matrix ${\bf X}/x$ in a block-encoding, where $x$ is a rescaling factor needed for normalization. This allows us 
to obtain certain properties of the entries of ${\bf X}$ exponentially faster than would be possible from preparing ${\bf X}$ as a quantum state.
 The query and gate complexities of the quantum circuit that implements this block-encoding are almost linear in a condition number that depends on $\bA$ and $\bB$, and depend logarithmically in the dimension and inverse error.
 We show how our quantum circuits can solve \BQP-complete problems efficiently, discuss potential applications and extensions of our  approach, its connection to Riccati equation, and comment on open problems.
\end{abstract}
\maketitle

\section{Introduction}

Quantum algorithms for linear algebra 
problems have the potential to unlock 
novel applications of quantum computers.
Among the best-known examples is the 
algorithm of Harrow, Hassidim, and
Lloyd 
(HHL)~\cite{HHL09}, which solves the linear system ${\bf A}\vec x= \vec b$ by encoding the solution in a unit quantum state $\ket x \propto \vec x$. Under some assumptions, the complexity of the HHL algorithm or its variants~\cite{CKS17,SSO19,costa2022optimal} is only polylogarithmic in the dimension of ${\bf A}$, thereby providing an exponential quantum speedup in some instances.
Since systems of linear equations are ubiquitous in science, prior works consider potential applications of this algorithm, including data fitting~\cite{wiebe2012quantum}, estimation of hitting times of Markov chains~\cite{CS16}, differential equations~\cite{CJS13,Ber14}, quantum machine learning~\cite{lloyd2013quantum}, and beyond~\cite{liu2022survey}. Nevertheless, identifying
practical applications for the HHL algorithm that yield substantial quantum speedups has proven challenging, due to the stringent requirements for the quantum algorithm to be efficient in applications~\cite{Aar15}. Exploring additional quantum approaches to linear algebra problems is then paramount
to addressing the value proposition of quantum computing in this domain.

This article also considers  
a quantum algorithm for a linear algebra problem, 
namely solving linear matrix equations. Such equations also have a broad range of practical applications in science and engineering, including control theory~\cite{castelan2005solution}, signal analysis~\cite{wei2015fast}, linear algebra~\cite{horn2012matrix}, differential equations~\cite{sorensen2003direct}, and physics~\cite{xu2014sylvester}. 
In contrast to systems of linear equations, the solution to linear matrix equations is an unknown matrix, ${\bf X}$.  
While conceptually matrix equations can also be interpreted as linear systems of large dimension (using vectorization), 
they are almost never solved in this way due to the large dimensions of the matrices. Instead, specialized classical algorithms aim at constructing the matrix ${\bf X}$ using more clever approaches~\cite{bartels1972algorithm,golub1979hessenberg}. A similar observation also applies to our quantum algorithm.

Recently, Ref.~\cite{Liu2025} considered a (non-linear) matrix equation where the goal is to construct a unitary quantum circuit that encodes the solution ${\bf X}/x$
in one of its blocks, i.e., a `block-encoding'.
Here, $x>0$ is needed for normalization.
Reference~\cite{clayton2024differentiable} also
provides a block-encoding solution to the Lyapunov equation, which is a special case of a linear matrix equation.
Block-encodings give a natural model for matrix equations that
contrasts other algorithms for quantum linear algebra, like the HHL algorithm, where the output is a quantum state. For problems related to computing entries of $\bX$, we show that this block-encoding approach can be more powerful: quantum states need to be normalized and the normalization factor is often exponentially small, while the corresponding normalization factor $1/x$ for block-encodings might not scale that way.

We focus on a fundamental matrix equation referred to as the Sylvester equation: ${\bf A} {\bf X} + {\bf X} {\bf B} = {\bf C}$, where ${\bf A}$, ${\bf B}$ and ${\bf C}$ are given. 
For the Sylvester equation, we construct a quantum circuit that prepares a block-encoding of ${\bf X} \in \mathbb C^{N \times N}$ in time that is almost linear (and sublinear in one case) in a condition number $\kappa$ that depends on the matrices $\bA$ and $\bB$, polylogarithmic in the inverse of a precision parameter, and polylogarithmic in $N$, under broad assumptions on the matrices. The block-encoding
is a unitary that contains ${\bf X}/x$ in its first diagonal block, where $x$ roughly scales as $\kappa \alpha$ and is needed for normalization (here, $\alpha$ is the block-encoding normalization of $\bC$, for instance of `$L_1$-norm' of coefficients in its representation as a linear-combination-of-unitaries). In instances where $\kappa$ depends polylogarithmically on $N$, the quantum circuit is efficient and can 
be used to solve problems with an exponential quantum speedup.
Indeed, the quantum circuits can be used to solve $\BQP$-complete problems efficiently, by adapting the $\BQP$-completeness result of Ref.~\cite{HHL09} to this setting, as solving systems of linear equations is an example of our more general problem.
Our results contrast prior work like Ref.~\cite{Liu2025}, which considered the algebraic Riccati equation instead, and required very different methods of solution.

While we do not analyze explicit gate counts in real-world problems, we summarize some potential practical applications of our quantum algorithm
to various science and domains. 
Like in other quantum algorithms for linear algebra such as HHL, some assumptions are needed to obtain an end-to-end algorithm that is efficient (e.g., a polylogarithmic condition number)~\cite{Aar15}. Nevertheless, since we work with block-encodings, the assumptions for the current setting are different and might be less stringent for some problems due to, for example, the different normalization factors that appear in block-encodings versus quantum states.
To this end, we also present a number of exponential separations in query complexity 
when comparing the block-encoding and state access models. These separations are related to computing matrix entries.
We note that our quantum circuits might not be the most efficient, or even most broadly applicable to solving classes of matrix equations. Finding tight lower bounds and quantum approaches for more general instances of matrix equations remain as open problems.

\section{Quantum linear matrix equation problem}
\label{sec:intro}

We focus on the Sylvester equation
but other linear matrix equations might be considered~\cite{lancaster1970explicit}.
Given matrices ${\bf A} \in \mathbb C^{M \times M}$, ${\bf B} \in \mathbb C^{N \times N}$, and ${\bf C} \in \mathbb C^{M \times N}$, the Sylvester equation is
\begin{align}
\label{eq:sylvestereq}
    {\bf A} {\bf X}+ {\bf X} {\bf B}= {\bC } \;.
\end{align}
The goal is to solve for the unknown matrix ${\bf X} \in \mathbb C^{M \times N}$. 
The matrices are not necessarily Hermitian but,
without loss of generality, we can assume $N=M$
and $\|{\bf A}\|\le 1/2$ and $\|{\bf B}\|\le 1/2$, where $\|.\|$ is the spectral norm; see Appendix~\ref{app:squarematrix}.
Importantly, we require ${\bf A}$ and ${\bf B}$ to be diagonalizable.

In our `quantum version' of this problem the main goal is to prepare a block-encoding of ${\bf X}$. 
This is a quantum circuit where the first block of the unitary matrix representation is proportional to $\bX$.
Since in principle $\|{\bf X}\|$
 can be larger than 1, this will introduce a normalization factor as block-encodings  have unit spectral norm. Formally, we define the quantum linear matrix equation  problem (QLME) as follows.

\begin{problem*}[QLME]
\label{prob:qlme}
    Let ${\bf A}$, ${\bf B}$, ${\bC}$
    be complex matrices of dimension $N\times N$
     satisfying $\|{\bf A}\| \le 1/2$, $\|{\bf B}\| \le 1/2$, $\|{\bf C}\| \le \alpha$, 
     and assume access to their block-encodings $U_{\bf A}$, $U_{\bf B}$, and $U_{\bC/\alpha}$, together with their inverses and controlled versions. Let $\kappa:=\|\bQ^{-1}\|$, where
     ${\bf Q}:={\bf A}\otimes \one_N + \one_N \otimes {\bf B}^\rT$  is assumed to be diagonalizable. Consider the Sylvester equation
     \begin{align}
         {\bf A}{\bf X}+{\bf X}{\bf B}={\bf C} \;.
     \end{align}
     The goal is to output a classical description of a quantum circuit $U_{{\bf X}/x} \in \mathbb C^{M \times M}$, for   $x \ge \kappa \alpha$ and   $M=\cO(\poly(N))$, that satisfies
     \begin{align}
         \|\Pi U_{{\bf X}/x} \Pi - {\bf X}/x \| \le \epsilon \;,
     \end{align}
     where  and $\Pi$ is the $N$-dimensional orthogonal projector and $\epsilon$ is the error. 
     That is, $U_{{\bf X}/x}$ is an approximate block-encoding of the normalized solution ${\bf X}/x$.
\end{problem*}

In our notation, $\one_N$ is the $N$-dimensional identity and $\bB^\rT$ is the transpose of $\bB$.
We note that $\|\bX\| \le \kappa \alpha$ and, since this bound is tight, we need the rescaling factor $x$ in the block-encoding. 
Indeed, the closer $x$ is to $\kappa \alpha$ the better, and we are able to achieve this up to logarithmic corrections.
Similar to quantum simulation problems where the goal is to provide a quantum circuit that simulates the evolution operator,
the QLME considers a quantum circuit that applies $\bX/x$. While technically 
this problem is solved classically (i.e., the QLME is a classical description of the quantum circuit), the end goal is to use 
$U_{\bX/x}$ in a quantum algorithm that `solves' the Sylvester equation.
For example, access to $U_{\bX/x}$ allows for the preparation of (column) states $\bX \ket j/\|\bX \ket j\|$, by simply preparing $U_{\bX/x}\ket j \ket 0$ and using amplitude amplification. It also allows for the computation of the entries $\bra j \bX \ket k$ by computing the corresponding expectation 
$\bra j \bra 0 U_{\bX/x}\ket k \ket 0$.

Our main result is an efficient quantum circuit that solves the QLME in certain cases.
\begin{theorem}
\label{thm:main}
    The quantum circuits that solve the QLME use the following number of queries to the block-encodings of $\bA$ and $\bB$, including their inverses and controlled versions, and number of arbitrary two-qubit gates:
    \begin{enumerate}[leftmargin=*]
        \item Normal matrices. If the matrix $\bQ$
        is normal, the query complexity for $x \propto \kappa \alpha \sqrt{\log(N/ \epsilon)}$ is
        $Q=\cO(\kappa \log( \kappa N/\epsilon))$ and the gate complexity is also $G=\cO(\kappa \log( \kappa N/\epsilon))$.
        \item Positive Hermitian part. If $\bQ_H \succ 0$ is the Hermitian part of $\bQ$, the query complexity for $x \propto \kappa \alpha \log(\kappa \|\bC\|/ \epsilon)$ is
        $Q=\cO(\kappa \polylog( \kappa \|\bC\|/\epsilon))$ and the gate complexity is also $G=\cO(\kappa \polylog( \kappa \|\bC\|/\epsilon))$.
        \item $\bB=0$. The query complexity for $x \propto \kappa \alpha \sqrt{\log(1/\epsilon)}$ is
        $Q=\cO(\kappa \log(1/\epsilon))$ and the gate complexity is also $G=\cO(\kappa \log (1/\epsilon))$.
    \item
    Positive matrices.
    If $\bQ \succ 0$ is positive where $\bA=(\bP_\bA)^2$ and $\bB=(\bP_\bB)^2$,
    assuming access to block-encodings of $\bP_\bA$ and $\bP_\bB$,
    we obtain improved results. The query complexity for $x \propto \kappa \alpha \log(  N/\epsilon)$ is $Q=\cO(\sqrt \kappa \log(\kappa N/\epsilon))$ and the gate complexity is also $G=\cO(\sqrt \kappa \log(\kappa N/\epsilon))$.
    \end{enumerate}
\end{theorem}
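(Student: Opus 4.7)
The plan is to unify the four cases through vectorization. Writing $\dket{\bX}=\sum_{ij}X_{ij}\ket i\ket j$ and using the identity $(\bA\otimes\bB^\rT)\dket{\bX}=\dket{\bA\bX\bB}$, \eq{sylvestereq} becomes the linear system $\bQ\dket{\bX}=\dket{\bC}$ on $\mathbb C^{N^2}$, with $\bQ=\bA\otimes\one_N+\one_N\otimes\bB^\rT$. A block-encoding of $\bX/x$ on $N$ dimensions corresponds, in this picture, to a coherent procedure that, on input $\ket{0}_a\ket k$, outputs a state proportional to $\bX\ket k$ on the appropriate block. Since $U_{\bC/\alpha}$ already realizes this for $\bC$, the core task reduces to applying an approximation of $\bQ^{-1}/\kappa$ to the joint two-register space and composing with $U_{\bC/\alpha}$. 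The transpose $\bB^\rT$ appearing on the auxiliary register is handled by the standard identity $(\bB\otimes\one)\dket{\bOm}=(\one\otimes\bB^\rT)\dket{\bOm}$ on the maximally entangled ancilla $\dket{\bOm}=\sum_i\ket i\ket i$. Each of the four cases then amounts to a different recipe for the common inversion step.

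For case 2 ($\bQ_H\succ 0$), I would start from the integral representation $\bQ^{-1}=\int_0^\infty e^{-\bQ t}\,dt$, which devectorizes to $\bX=\int_0^\infty e^{-\bA t}\,\bC\,e^{-\bB t}\,dt$. Truncate to $t\in[0,T]$ with $T=\cO(\kappa\log(\kappa\|\bC\|/\epsilon))$ and discretize via Gaussian quadrature with $K=\mathrm{polylog}(\kappa\|\bC\|/\epsilon)$ nodes; at each node, build block-encodings of $e^{-\bA t_k}$ and $e^{-\bB t_k}$ by QSVT polynomial approximations of degrees $\cO(t_k+\log(1/\epsilon))$ applied to $U_\bA$ and $U_\bB$, and combine them with $U_{\bC/\alpha}$ via LCU to obtain a block-encoding of $\bX/x$. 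The cost is dominated by the largest node, yielding $\cO(\kappa\,\mathrm{polylog}(\kappa\|\bC\|/\epsilon))$. For case 1 (normal $\bQ$), the same template applies but with a Fourier-type representation, approximating $1/\lambda$ by $\sum_\ell c_\ell e^{i\lambda t_\ell}$ accurate on the complex region $1/\kappa\le|\lambda|\le 1$ carved out by the spectrum of $\bQ$; the $\sqrt{\log(N/\epsilon)}$ factor absorbed into $x$ comes from a Gaussian envelope used to separate the spectrum uniformly from the origin across all phases.

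Case 3 bypasses vectorization entirely: with $\bB=0$ the Sylvester equation reads $\bA\bX=\bC$, so I would apply optimal QSVT matrix inversion to $U_\bA$, producing a block-encoding of $\bA^{-1}/\kappa$ at cost $\cO(\kappa\log(1/\epsilon))$, and compose it with $U_{\bC/\alpha}$ as a product of block-encodings. Case 4 is the most interesting: assuming $\bA=\bP_\bA^2$, $\bB=\bP_\bB^2$ and $\bQ\succ 0$, one has the factorization $\bQ=\bR^\dag\bR$ with $\bR:=\bP_\bA\otimes\one_N+i\,\one_N\otimes\bP_\bB^\rT$, whose minimum singular value is $\ge 1/\sqrt\kappa$. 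QSVT matrix inversion applied to an LCU block-encoding of $\bR$ built from $U_{\bP_\bA}$ and $U_{\bP_\bB}$ costs only $\cO(\sqrt\kappa\log(\kappa/\epsilon))$ queries, and $\bQ^{-1}=\bR^{-1}(\bR^\dag)^{-1}$ is realized by two such inversions. This is the block-encoding analogue of the $\sqrt\kappa$ speedup for positive linear systems with square-root access.

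The main obstacle is the precision and normalization bookkeeping. Translating the $N^2$-dimensional inverse into the $N$-dimensional block-encoding of $\bX/x$ requires reinterpreting the second vectorization register as the input-column register of $U_{\bX/x}$ without inflating ancilla overhead, with the maximally-entangled identity above as the essential gadget. Separately, each case needs an error analysis that combines QSVT polynomial error, integral truncation error, and quadrature error into a single bound $\epsilon$, while keeping the rescaling $x$ within polylogarithmic factors of the tight lower bound $\kappa\alpha$. This is standard but tedious, and is the source of the case-dependent logarithmic overheads appearing in the theorem.
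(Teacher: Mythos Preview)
Your vectorization framework matches the paper, but you are underestimating a structural constraint that drives the entire proof. To produce a block-encoding of $\bX/x$ on the $N$-dimensional system, the approximation to $\bQ^{-1}$ must be a linear combination of \emph{tensor products of unitaries}, $\bQ^{-1}\approx\sum_\ell c_\ell\,V_\ell\otimes W_\ell^\rT$, so that devectorization yields $\bX\approx\sum_\ell c_\ell\,V_\ell\,\bC\,W_\ell$ with each term implementable as a product of the given $N$-dimensional block-encodings. The identity $(\bB\otimes\one)\dket{\bOm}=(\one\otimes\bB^\rT)\dket{\bOm}$ does \emph{not} supply this: it holds only on $\dket{\bOm}$ and does not give you a block-encoding of $\bB^\rT$ or $\bP_\bB^\rT$, since query access to $U$ and $U^\dagger$ does not yield $U^\rT$ or $U^*$. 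Without the tensor-product form you cannot translate an $N^2$-dimensional operation back to an $N$-dimensional block-encoding; the paper's Appendix in fact shows that dropping this structure (expanding a Chebyshev approximant of $\bQ^{-1}$ into monomials and devectorizing termwise) incurs an $L_1$ blowup exponential in $\kappa$.

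This breaks your Cases 2 and 4 concretely. In Case 2, $\bA$ and $\bB$ are not Hermitian (only $\bQ_H\succ 0$ is assumed), so QSVT cannot produce a block-encoding of $e^{-\bA t}$: QSVT transforms singular values, not eigenvalues of non-normal matrices. The paper instead invokes the LCHS identity $e^{-t\bQ}=\int_{\mathbb R}\hat f(\omega)\,e^{-it(\omega\bQ_H+\bQ_S)}\,d\omega$, which devectorizes cleanly into unitary evolutions $e^{-it(\omega\bA_H+\bA_S)}\,\bC\,e^{-it(\omega\bB_H+\bB_S)}$. In Case 4, your factorization $\bQ=\bR^\dagger\bR$ with $\bR=\bP_\bA\otimes\one+i\,\one\otimes\bP_\bB^\rT$ is algebraically correct and your $\sqrt\kappa$ intuition is right, but you cannot block-encode $\bR$ without access to $\bP_\bB^\rT$, so QSVT inversion of $\bR$ is unavailable. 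The paper obtains the $\sqrt\kappa$ differently, via Hubbard--Stratonovich $e^{-t\bP_\bA^2}=\tfrac{1}{\sqrt{2\pi}}\int e^{-\omega^2/2}e^{-i\omega\sqrt{2t}\,\bP_\bA}\,d\omega$ applied independently to each tensor factor; this preserves the tensor-product-of-unitaries form, and the $\sqrt{t}$ in the exponent is precisely what reduces the maximal evolution time from $\cO(\kappa)$ to $\cO(\sqrt\kappa)$. Your Case 3 is fine, and Case 1 is in the right spirit but needs the same discipline: the paper uses a triple integral over $t,\omega,\omega'$ that keeps $e^{-i\omega t\bQ_H-i\omega' t\bQ_S}$ factored across the tensor product.
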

The asymptotic parameters are $1/\epsilon$, $\kappa$, and $N$. The circuit $U_{\bX/x}$ also requires a single call to $U_{\bC/\alpha}$.

To prove Thm.~\ref{thm:main} we
 use a standard technique
based on linear combinations of unitaries (LCU)~\cite{CKS17,BCC+15,LC19}. Such unitaries involve Hamiltonian evolutions with Hermitian matrices obtained from ${\bf A}$ and ${\bB}$. 
These evolutions can be simulated using the respective block-encodings via quantum signal processing~\cite{LC17}: simulating $e^{-\ri t \bH}$
for Hermitian $\bH$ and error $\epsilon$ requires $\cO(t +\log(1/\epsilon))$ queries to the block-encoding $U_\bH$ of $\bH$ plus arbitrary gates.
The query complexities for the QLME are then determined by the largest evolution times and the allowable error. Additional gates are needed for the implementation of the LCU using ancillary registers. The detailed proofs for each case are in Appendix~\ref{app:complexity}. 

Our approach to constructing $U_{\bX/x}$ is related to that  in Ref.~\cite{CKS17} for the quantum linear systems problem that uses a linear combination of Hamiltonian simulations. However,
we will see that while ${\bf X}$ is   linear in ${\bC}$, this can involve exponentials of both ${\bf A}$ and ${\bf B}$. For this reason, Thm.~\ref{thm:main} is solved using expressions for the inverse operator that only apply to the specific cases discussed. It is open to provide a general solution that gives an efficient quantum circuit for the QLME for all invertible ${\bf Q}$.

\section{Solutions to the Sylvester equation}

We describe some identities needed for our main result.
There is vast literature on solutions   useful for {\em classical} algorithms~(Cf.\ \cite{lancaster1970explicit,bartels1972algorithm,sorensen2003direct}). Our setting is  distinct, since we require expressions for solutions that  are amenable to {\em quantum} algorithms, given as LCUs.
These solutions
are better understood  using `vectorization'.
For a matrix ${\bf F} =\sum_{j,k} f_{jk} \ket j \bra k$, we let $\dket{F}:=\sum_{j,k} f_{jk}\ket{j,k}$ be the vectorized form of ${\bf F}$. For the Sylvester equation, we obtain
\begin{align}
  ( {\bf A} \otimes \one_N)  \dket{X} + (\one_N \otimes  {\bf B}^\rT)  \dket{X}= \bQ \dket{X}=\dket{C} \;.
\end{align}
This is a system of linear equations satisfying
\begin{align}
\label{eq:vectorsolution}
   \dket{X} =\frac 1 {\bf Q} \dket{C} \;.
\end{align}
If $\gamma_j$ and $\lambda_k$ denote eigenvalues of ${\bf A}$ and ${\bf B}$, respectively, the solution  is unique as long as ${\bf Q}$ is invertible, i.e., when $\gamma_j + \lambda_k \ne 0$ for all $j,k \in \{1,\ldots,N\}$. We let $\kappa = \|\bQ^{-1}\|$, which is related to
 the condition number of ${\bf Q}$.

The vectorization approach would readily
provide a quantum algorithm for the Sylvester equation
using the HHL algorithm.
However, here we are interested in obtaining a block-encoding of $\bX/x$ and we would
like to express the solution in a different form that is useful to our approach. 
To this end, we consider integral identities for the inverse, which  
require some assumptions on the matrices $\bA$ and $\bB$.  

1. {\bf Normal matrices.}  
        This case assumes $[{\bf Q},{\bf Q}^\dagger]=0$ 
        and we decompose ${\bf Q}={\bf Q}_H + \ri {\bf Q}_S$,
        where ${\bf Q}_H$ and ${\bf Q}_S$ are Hermitian, and $[{\bf Q}_H,{\bf Q}_S]=0$. 
        Since $\bQ \bQ^\dagger = \bQ_H^2+\bQ_S^2 \succ 0$, we consider the identity
        \begin{align}
            \frac 1 {\bf Q}
            &= \int_{0}^\infty \rd t \;  (t {\bf Q}^\dagger) e^{-t^2 {\bf Q}_H^2/2}
            e^{-t^2 {\bf Q}_S^2/2} \;.
        \end{align}
        A Fourier transform gives
        \begin{align}
       ( t  \bQ_H )  e^{-t^2 {\bf Q}_H^2/2} = \frac{\ri}{\sqrt{2\pi}} \int_{-\infty}^\infty \! \! \rd \omega \;\omega e^{-\omega^2/2}e^{-\ri \omega t {\bf Q}_H} \;.
        \end{align}
        Using a similar expression for $(t \bQ_S) e^{-t^2 {\bf Q}_S^2}$ allows us to express $\frac 1 {\bf Q}$ as a triple integral involving exponential operators of the form
        $e^{-\ri \omega t {\bQ_H} -\ri \omega' t {\bQ_S}}$,
        which can be expressed using ${\bf A}$
        and $\bB$.
        Applying the inverse of vectorization to Eq.~\eqref{eq:vectorsolution}
        gives
        \begin{align}
        \nonumber
            {\bf X}&= \frac 1 {2\pi}\int_{0}^\infty \! \! \rd t \int_{-\infty}^\infty \! \! \rd\omega \int_{-\infty}^\infty \! \! \rd\omega' \; e^{-(\omega^2+\omega'^2)/2} (\ri\omega+ \omega')\times \\
            \label{eq:case1solution}
            & \quad \quad \times
            e^{-\ri \omega t {\bf A}_H-\ri \omega' t {\bf A}_S} {\bC} e^{-\ri \omega t {\bf B}_H-\ri \omega' t {\bf B}_S} \;.
        \end{align}

      2.  {\bf Positive Hermitian part.}  This assumes ${\bf Q}_H \succ 0$ and
         $\Re(\gamma_j+\lambda_k)>0$.
       We consider the identity
        \begin{align}
        \label{eq:inversewithexponential}
            \frac 1 {\bf Q} = \int_0^\infty \rd t \; e^{-t {\bf Q}} \;.
        \end{align}
        Let $f(\omega)$ be such that
        $e^{-x}=\int_{-\infty}^{\infty} d \omega \; f(\omega) e^{-\ri \omega x}$ for $x \ge0$ and unspecified otherwise.  Then, Ref.~\cite{an2023quantum} gives
        \begin{align}
           e^{-t {\bf Q}} = \int_{-\infty}^{\infty} d \omega \; f(\omega)  e^{-\ri t ( \omega  {\bf Q}_H + {\bf Q}_S)} \;,
        \end{align}
        for many such functions.
 Applying the inverse of vectorization to Eq.~\eqref{eq:vectorsolution}
        gives  
        \begin{align}
         \label{eq:case2solution}
            {\bf X}=  \int_0^\infty \! \! \! \rd t \int_{-\infty}^{\infty} \! \! \! \rd \omega \; f(\omega) 
            e^{-\ri t ( \omega  {\bf A}_H - {\bf A}_S)} \bC e^{-\ri t( \omega  {\bf B}_H - {\bf B}_S)} \;.
        \end{align}

3. {\bf $\quad {\bf B}={\bf 0}$}.
        This case is similar to the quantum linear systems problem~\cite{CKS17} and we can directly compute $1/\bA$
        and apply it to $\bC$ without using vectorization. The matrix $\bA$ is Hermitian without loss of generality in this case.
        We use the identity
        \begin{align}
            \frac 1 {\bf A} = \int_0^{\infty} \rd t \; (t {\bf A}) e^{-t^2 {\bf A}^2/2} \;,
        \end{align}
    and the Fourier transform to express
        \begin{align}
            (t {\bf A}) e^{-t^2 {\bf A}^2/2} = \frac{\ri}{\sqrt{2\pi}}\int_{-\infty}^\infty \rd \omega \; \omega e^{-\omega^2/2}e^{-\ri \omega t {\bf A}} \;.
        \end{align}
         The solution to Eq.~\eqref{eq:sylvestereq}
        is the double integral
        \begin{align}
         \label{eq:case3solution}
            \bX =\int_0^{\infty} \rd t \int_{-\infty}^\infty \rd \omega \; \omega e^{-\omega^2/2} e^{-\ri \omega t {\bf A}} \bC \;.
        \end{align}
        (The case 
        ${\bf A}={\bf 0}$ is similar, only that we need to act with $e^{-\ri \omega t {\bf B}}$ on the right of $\bC$.)


        4. {\bf Positive matrices.} This case 
         assumes access to matrices ${\bP}_\bA$ and 
        ${\bP}_\bB$ such that
        ${\bf A}=({\bP}_\bA)^2 \succ 0$ and 
        ${\bf B}=({\bP}_\bB)^2 \succ 0$,  implying a faster algorithm. Then $ e^{-t {\bf Q}}= e^{-t ({\bP_\bA})^2} \otimes e^{-t ({\bP_\bB^\rT})^2}$ in Eq.~\eqref{eq:inversewithexponential} and the Hubbard-Stratonovich identity~\cite{hubbard1959calculation} implies
        \begin{align}
        \nonumber
            {\bf X}=\frac 1 {2\pi} \int_0^\infty & \rd t \int_{-\infty}^\infty \rd \omega
            \int_{-\infty}^\infty \rd \omega' \; e^{-(\omega^2 + \omega'^2)/2} \times \\
            \label{eq:case4solution}
           & \times e^{-\ri \omega \sqrt{2t} {\bP}_\bA} \bC e^{-\ri \omega' \sqrt{2t} {\bP}_\bB} \;.
        \end{align}

Unfortunately, obtaining useful expressions for the general case
seems challenging. For example, it is not obvious 
to find a suitable reduction from the general problem to one where ${\bf Q}$
is normal or Hermitian, which contrasts the case of
the quantum linear systems problem where ${\bf A}$ (or ${\bf B}$) can be always assumed to be Hermitian.
One method of generalization is to approximate the inverse of ${\bf Q}$ by a polynomial, but that approach yields exponential complexity in $\kappa$ (see Appendix \ref{app:nonpos}).
Also, note that we can multiply all matrices by the same phase, so all prior assumptions on $\bA$ and $\bB$
are up to a phase.

\section{Quantum circuits for the QLME}

The prior expressions show how $\bX/x$ can be constructed using time evolutions with respective matrices. Such expressions readily give an idea of the query complexities in Thm.~\ref{thm:main},
 essentially determined by the largest evolution times. While the integrals are up to $t \rightarrow \infty$, it suffices to  cut them off at some time $T$ that is linear on $\kappa$ and (sub)logarithmic in the inverse precision; e.g., $\|e^{-T \bQ}\|=e^{-T/\kappa}=\cO(\varepsilon)$ for $T =\Omega(\kappa \log(1/\varepsilon))$. It is also possible to approximate the integrals over $\omega$ and $\omega'$ by placing proper cutoffs since the integrals involve Gaussians.
Then,
to construct $U_{\bX/x}$,
we approximate the  integrals by finite sums.   These are discussed in Appendix~\ref{app:complexity} in great detail.
The result is an approximation of $\bX$ as an LCU, and we use standard approaches to implement it, like the following well-known result.
\begin{lemma}[Adapted from Lemma 6 of Ref.~\cite{CKS17}]
\label{lem:BEfromLCU}
    Let ${\bf M}=\sum_i \alpha_i U_i$ be an LCU with $\alpha_i>0$ and $U_i$ unitary. Let $V$ be any unitary that maps $V \ket 0 = \frac 1 {\sqrt \alpha }\sum_i \sqrt{\alpha_i}\ket i$, where $\alpha=\sum_i \alpha_i$. Then, if $U:=\sum_i \ketbra i \otimes U_i$, $V^\dagger U V$ is a block-encoding of ${\bf M}/\alpha$:
    \begin{align}
        \Pi V^\dagger U V \Pi = \frac 1 \alpha {\bf M} \;.
    \end{align}
\end{lemma}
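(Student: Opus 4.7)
The plan is to verify the identity by a direct computation of the action of $V^\dagger U V$ on states whose ancilla register is in $\ket 0$. I interpret $\Pi$ here as the projector $\ketbra{0}{0}\otimes \one$ that singles out the top-left ancilla block, and I need to show that this block equals ${\bf M}/\alpha$.

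First I would pick an arbitrary system state $\ket\psi$ and push $\ket 0 \ket \psi$ through the circuit. By the hypothesis on $V$,
\begin{align*}
  (V \otimes \one)\, \ket 0 \ket \psi = \frac{1}{\sqrt\alpha} \sum_i \sqrt{\alpha_i}\, \ket i \ket\psi.
\end{align*}
Then I would apply the controlled LCU unitary $U = \sum_i \ketbra{i}{i} \otimes U_i$ and use orthonormality of the $\{\ket i\}$ basis to obtain
\begin{align*}
  U (V\otimes \one) \ket 0 \ket\psi = \frac{1}{\sqrt\alpha} \sum_i \sqrt{\alpha_i}\, \ket i \otimes U_i \ket \psi.
\end{align*}

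Finally, I would apply $V^\dagger$ on the ancilla and project onto $\ket 0$. Because the coefficients $\alpha_i$ are positive real by assumption, $\bra 0 V^\dagger \ket i = \overline{\sqrt{\alpha_i/\alpha}} = \sqrt{\alpha_i/\alpha}$, so
\begin{align*}
  (\bra 0 \otimes \one)\, V^\dagger U V\, (\ket 0 \otimes \ket\psi) = \sum_i \frac{\alpha_i}{\alpha}\, U_i \ket\psi = \frac{{\bf M}}{\alpha}\, \ket \psi.
\end{align*}
Since $\ket \psi$ was arbitrary, this establishes the block-encoding identity.

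The proof is really a one-line computation, so there is no serious technical obstacle. The only mildly delicate point is the positivity of the $\alpha_i$: this ensures that the same real square roots appear both in $V\ket 0$ and (after complex conjugation) in $\bra 0 V^\dagger$, so no stray phases survive to spoil the sum. If in applications the LCU decomposition carries complex coefficients $c_i = \alpha_i e^{\ri \theta_i}$, one first absorbs the phases into the unitaries by setting $\tilde U_i := e^{\ri \theta_i} U_i$, which reduces to the stated hypothesis. I would also note that the freedom in how $V$ acts on ancilla basis vectors orthogonal to $\ket 0$ is irrelevant, since the projector $\Pi$ discards those contributions; this is important for the applications in Appendix~\ref{app:complexity}, where $V$ is constructed from efficient Gaussian-state preparation subroutines whose off-$\ket 0$ action is not explicitly controlled.
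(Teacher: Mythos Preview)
Your proof is correct and is the standard direct computation for this well-known LCU/block-encoding lemma. The paper does not actually supply its own proof of this statement---it simply quotes the result as adapted from Lemma~6 of Ref.~\cite{CKS17}---so there is nothing further to compare.
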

Often, $V$ and $U$ are referred to as the PREPARE and SELECT operations~\cite{BCC+15}.
This Lemma and the results in Supp.\ Mat.\ prove Thm.~\ref{thm:main}. Figure~\ref{fig:normalcase}
illustrates $U_{\bX/x}$ for the case of normal matrices.

\begin{figure}[htb]
\includegraphics[scale=.25]{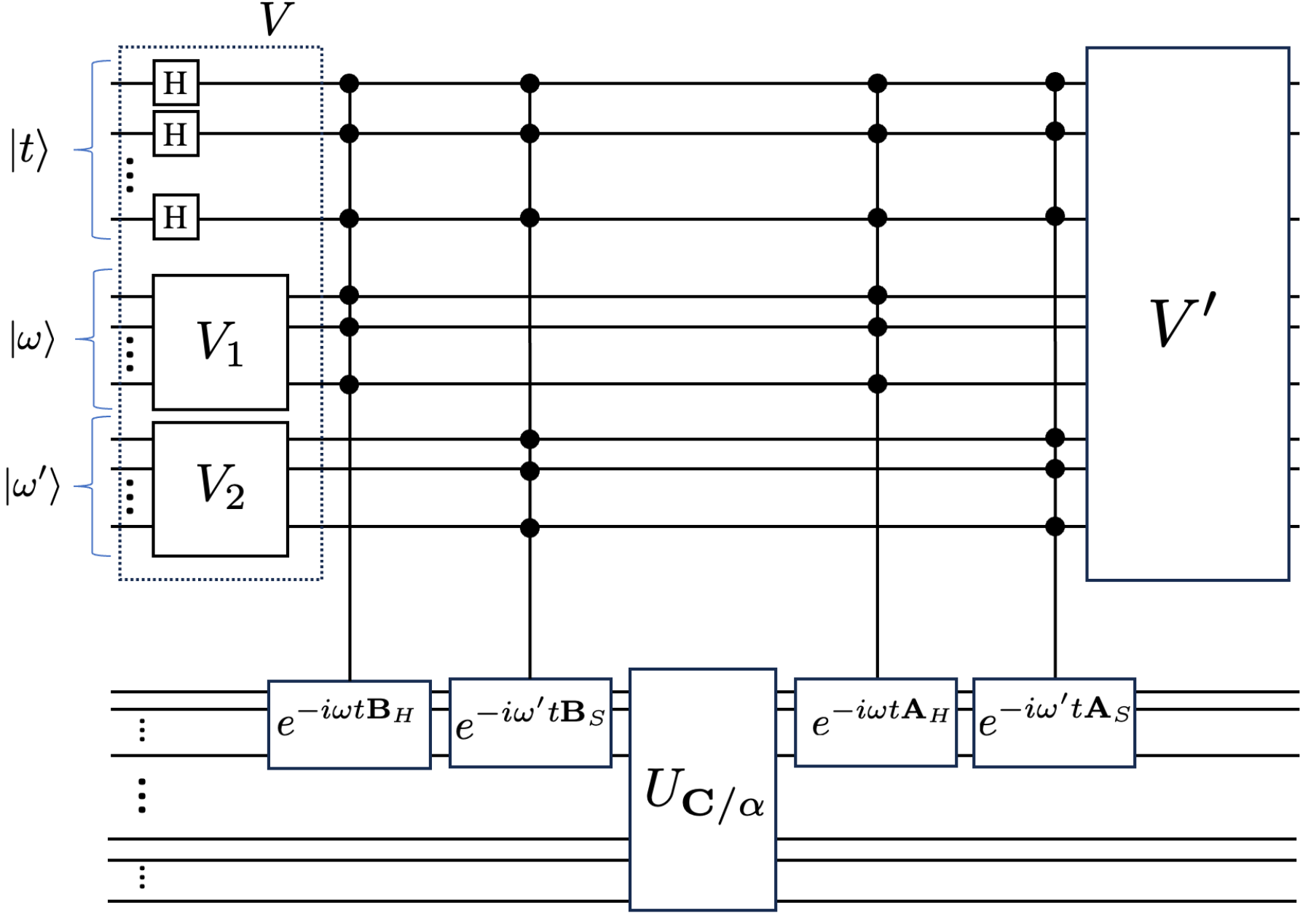} 
\caption{An approximate block-encoding of $\bX/x$ when the matrices are normal. Three ancillary registers are used to encode the discrete values of $t$, $\omega$, and $\omega'$. The circuit gives a block-encoding that approximates the integral in Eq.~\eqref{eq:case1solution} by a finite LCU. Illustrated is only one of the two terms (due to $\omega + \ri \omega'$) in Eq.~\eqref{eq:case1solution}. The black circles denote standard controlled operations. The unitaries $V_1$ and $V_2$ produce states that encode the coefficients in the LCU.  The unitary $V'$ is similar to $V^\dagger$, up to phases in the computational basis.
}
\label{fig:normalcase}
\end{figure}

\section{Applications}

Linear matrix equations are fundamental in   science and we give a 
list of potential applications of our quantum circuits.
Nevertheless, like other quantum algorithms, several requirements must be met for efficiency. Such requirements constrain the set of problems and instances for which quantum advantage might be possible.
Performing detailed complexity analyses and comparing with specialized classical algorithms will be important to characterize those real-world problems that can benefit from our results. 
At the same time we note that the QLME allows
for solving $\BQP$-complete problems efficiently. For example, in an instance where $\bB=0$ and $\bC=\one_N$, the circuit $U_{\bX/x}$
is a block-encoding of $\frac 1 {\kappa \bA}$, and results in Ref.~\cite{HHL09} show that this inverse can be used to simulate any quantum circuit efficiently, highlighting the potential of an exponential quantum advantage.

\vspace{0.1cm}

{\bf Matrix functions.} The Sylvester equation can be used in matrix block-diagonalization~\cite{horn2012matrix,dmytryshyn2015coupled}. Consider the block matrix 
\begin{align}
 \bS:=   \begin{pmatrix} \bA & \bC \cr {\bf 0}& \bB
    \end{pmatrix} =\begin{pmatrix} \one_N & -{\bf X} \cr {\bf 0}& \one_N
    \end{pmatrix} \begin{pmatrix} \bA & {\bf 0} \cr {\bf 0}& \bB
    \end{pmatrix}\begin{pmatrix} \one_N & {\bf X} \cr {\bf 0}& \one_N
    \end{pmatrix} \;.
\end{align}
The diagonalizing transformation is such that
\begin{align}
    \bA \bX - \bX \bB = \bC \;,
\end{align}
which is a Sylvester equation. By producing a block-encoding of $\bX$, and hence that of the diagonalizing transformation, we can construct block-encodings of functions of $\bS$. For example, if $\bA$ and $\bB$ are Hermitian, we can use QSVT to compute a block-encoding of a function $f$ of the block-diagonal matrix, and then use the block-encoding of $\bX$ to compute a block-encoding of $f(\bS)$, which is an alternative approach to quantum eigenvalue processing~\cite{Low2025Eigenvalue}.

\vspace{0.1cm}
{\bf Differential equations.} 
    Examples like the two-dimensional Poisson equation $\partial_v^2 x(v,w) + 
    \partial_w^2 x(v,w) = c(v,w)$, where $c(v,w)$ is the source,  classical algorithms discretize to a grid to write the equation as
    \begin{align}
        {\bf P}^2 \bX + \bX  {\bf P}^2 = \bC \;,
    \end{align}
    where $\bX$ encodes the solution as $\bU \approx \sum_{v,w} x(v,w) \ket v \bra w$, 
    $\bC$ encodes the source as $\bC \approx \sum_{v,w} c(v,w) \ket v \bra w$, and $\bP^2$ is the discretized Laplacian. This example can be addressed with our quantum algorithm. Note that the Laplacian condition number is $\cO(N^2)$,
    but since this is an example of case 4, the complexity of the circuit is $\tilde {\cal O}(N)$. In contrast, exact classical algorithms might suffer from a $\cO(N^2)$ scaling when needing to represent the full $N \times N$ matrix $\bX$.

    Matrix equations also appear within the context of Sylvester flow linear inhomogeneous differential
    equations $\frac {\rd}{\rd t}\bX(t) =\bC - \bA \bX(t) - \bX(t) \bB$. The fixed point of such equation satisfies the Sylvester equation, for which our quantum algorithm can be used.

    \vspace{0.1cm}
{\bf Uncertainty quantification (UQ).}
    For certain differential equations, 
    we can introduce an additional (unknown) parameter $s$ in the input and consider different realizations for different values of $s$. For example, for the Poisson equation, we obtain $\partial_v^2 x(v,w,s) + 
    \partial_w^2 x(v,w,s) = c(v,w,s)$. In UQ one is interested in the response for various $s$. When considering an average over $s$, for example, we are interested in solving the Sylvester equation
     \begin{align}
        {\bf P}^2 \overline \bX + \overline\bX  {\bf P}^2 = \overline\bC \;,
    \end{align}
    where $\overline\bC$ is the average of the source term.

    Beyond, the instances where $\bB=\bf 0$ can be interpreted as representing multiple instances of the quantum linear systems problem for exponentially-many input vectors $\vec C_j$, which are encoded in the columns of a matrix $\bC$. Hence, that case allows for studies of UQ in systems of linear equations.

\vspace{0.1cm}
{\bf Physics and control theory.}
The Sylvester equation is ubiquitous in these domains. For example, the same differential equation  $\frac {\rd}{\rd t}\bX(t) =\bC - \bA \bX(t) - \bX(t) \bB$ governs the dynamics of correlation matrices under noise or dissipation (open quantum system dynamics), known as the Lyapunov equation when $\bA = \bB^\rT$.
The steady state satisfies the Sylvester equation. This can also be interpreted as a generalization of the results in Ref.~\cite{somma2025shadow} on shadow Hamiltonian simulation to the case where the evolution is not unitary.
Examples are models of quantum or classical oscillators with friction and external noise.

In perturbation theory, the Schrieffer-Wolff transformation is used to remove a perturbation $V$ (at first order) in the Hamiltonian $H=H_0+V$, resulting in the equation $-\ri[S,H_0]=V$. This is also a Sylvester equation, and while the solution $S$ is not unique (e.g., $S'=S+\beta \one$ is also a solution), one often adds a regularizing term  to obtain a proper solution: $-\ri[S,H_0]+\gamma S=V$, for small $\gamma >0$.
Since the smallest eigenvalue of $\bQ$ is now of magnitude $\gamma$ and $H_0$ is Hermitian, the system is well conditioned and we can use the case 1 results to solve for $S$. While  this is a problem that may be addressed classically efficiently for some instances, such classical algorithms do not apply in general.

Within the context of control theory,
an example problem is that of 
minimizing the Frobenius norm $\| {\bf A}{\bf X} + {\bf X}{\bf B}-{\bf C}\|_{\rm Fr}$ where the corresponding system is overdetermined. Again, one approach to solving this would be via vectorization and using the pseudoinverse of $\bQ$. But this approach can be inefficient
if one is interested in certain matrix properties $\bX$, due to the rescaling factor appearing during vectorization. Instead, we can construct a quantum circuit that encodes $\bX$
from the identity for the pseudoinverse of $\bQ$, and obtain the solution to this minimization problem. This generalizes the results in Ref.~\cite{wiebe2012quantum} for finding a least-square fit
to the current setting of linear matrix equations.

\vspace{0.1cm}

One of the reasons we focus on examples in differential equations and physics is that, when Sylvester equations emerge in those contexts, there is often some intrinsic structure on the matrices $\bA$, $\bB$ and $\bC$ that naturally gives a succinct representation for them. For instance, if these matrices are related to the interactions of a physical system (e.g., interactions in a lattice or a Laplacian), they often are sparse and structured in a way that would facilitate efficient block-encodings. Conversely, less structured problems arising in contexts like machine learning may present greater challenges, including the inability of finding
 efficient block-encodings of the data.

We also note that access to $\bX/x$ allows
for computing matrix entries or expectation values of $\bX/x$ efficiently, or even learning $\bX/x$ efficiently, by using standard techniques (e.g., the Hadamard test). 
Hence, the type of applications considered should 
be those that benefit from these features.

\section{Oracle separations}
\label{sec:separations}

The QLME differs from the definition
of other quantum linear algebra problems in that the goal is to prepare a block-encoding of the solution and not a quantum state. 
Having access to ${\bf X}/x$ rather than $\dket{X}/\|\dket{X}\|$
allows for certain efficient computations that would otherwise require exponentially many copies of the state  (and vice versa). The following results, proven in Appendix~\ref{app:separations}, highlight these oracle separations.
They do not concern solutions to a specific matrix equation but rather the two different forms of access.
Note that when we discuss the number of block-encodings required, this includes their inverses and controlled versions.

\begin{theorem}[Exponential separation in computing matrix entries]
\label{thm:separation1}
    Let $\bC=\sum_{j,k=1}^N c_{jk}\ket j \bra k \in \mathbb C^{N \times N}$
    be a permutation matrix and assume block-encoding access
    to $\bC$. 
   Let $\dket C = \sum_{j,k=1}^N c_{jk}\ket{j,k}$ be the vectorization of $\bC$ and assume access to a unitary that prepares the corresponding unit state, i.e., $U_\bC\ket 0 = \dket C/\|\dket C\|$.
    Consider the decision problem of determining whether ${\bra 0}\bC\ket 0$ is 0 or 1. Then, this problem can be solved with
    probability 1 with one query to $\bC$ but requires $\Omega(\sqrt N)$ queries to $U_\bC$ to solve it w.h.p..
\end{theorem}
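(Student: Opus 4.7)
The plan for the block-encoding side is immediate. Applying $U_\bC$ to $\ket{0}_a \ket{0}_s$, the block-encoding relation $(\bra{0}_a \otimes I)\, U_\bC\, (\ket{0}_a \otimes I) = \bC$ yields
\[
U_\bC \ket{0}_a \ket{0}_s = \ket{0}_a\, \bC\ket{0} + \ket{\perp},
\]
where $\ket{\perp}$ is orthogonal to $\ket{0}_a$ in the ancilla. Because $\bC$ is a permutation matrix, $\bC\ket{0} = \ket{\pi(0)}$ is a unit vector, which forces $\ket{\perp} = 0$. A computational-basis measurement of the system register deterministically returns $\pi(0)$, so we output $1$ iff the outcome is $0$. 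This uses exactly one query to $U_\bC$ and no amplitude estimation.

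\textbf{Lower bound.} For the state-preparation oracle we intend to carry out a hybrid-style indistinguishability argument against a carefully chosen pair of instance families. Take $\pi_0 = \mathrm{id}$ (so $c_{00}=1$ and $\dket{C_0}/\sqrt{N} = N^{-1/2}\sum_i \ket{i,i}$) versus, for each $j \in \{1,\dots,N-1\}$, the transposition $\pi_j = (0\, j)$ (so $c_{00}=0$ and $\dket{C_j}/\sqrt{N}$ differs from $\dket{C_0}/\sqrt{N}$ only at the two basis states $\ket{0,0},\ket{j,j}$ which are replaced by $\ket{0,j},\ket{j,0}$). A direct calculation gives $\langle\!\langle C_0|C_j\rangle\!\rangle/N = 1 - 2/N$, hence $\|\dket{C_0}/\sqrt{N} - \dket{C_j}/\sqrt{N}\| = 2/\sqrt{N}$. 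Since the problem statement leaves $U_\bC$ underspecified beyond its action on $\ket{0}$, the lower bound must hold for the adversary's worst choice; we complete each $U_{C_j}$ to a unitary that agrees with $U_{C_0}$ on the orthogonal complement of $\ket{0}$, yielding $\|U_{C_0} - U_{C_j}\| = O(1/\sqrt{N})$ uniformly in $j$.

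The reduction step then runs as follows. Suppose an algorithm makes $T$ queries (including inverses and controlled versions) to $U_\bC$ and decides the problem with high probability on every input. By the standard BBBV hybrid argument, the total-variation distance between the distributions of the algorithm's output on inputs $U_{C_0}$ and $U_{C_j}$ is at most $2T\,\|U_{C_0}-U_{C_j}\| = O(T/\sqrt{N})$. Averaging over a uniformly random $j\in\{1,\dots,N-1\}$ and comparing with the $\pi_0$ instance (the classical OR-type setup underlying Grover's lower bound), constant distinguishing probability forces $T = \Omega(\sqrt{N})$. Equivalently, one can cite the adversary-method or polynomial-method lower bound for unstructured search under a state-preparation oracle, which gives the same conclusion.

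\textbf{Main obstacle.} The only subtlety, and the point that deserves care in the write-up, is establishing that the adversary really can saturate the distance $\|U_{C_0} - U_{C_j}\| = O(1/\sqrt{N})$ for the controlled and inverse versions of $U_\bC$ simultaneously, since a hybrid argument over a $T$-query algorithm counts all such calls. The fix is to build $U_{C_j}$ as a product of $U_{C_0}$ with a rotation supported on the two-dimensional subspace $\mathrm{span}\{\dket{C_0}/\sqrt{N},\dket{C_j}/\sqrt{N}\}$; this makes $U_{C_j}^{-1}$ and $\mathrm{ctrl}\text{-}U_{C_j}$ differ from their $C_0$ counterparts by the same $O(1/\sqrt{N})$ perturbation, after which the hybrid bound applies verbatim and closes the argument.
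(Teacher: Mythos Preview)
Your proof is correct and follows essentially the same approach as the paper: both use a single application of the (unitary) permutation matrix followed by a computational-basis measurement for the upper bound, and both compare a pair of permutation instances whose vectorized states differ by $O(1/\sqrt{N})$, then invoke a hybrid/adversary argument for the lower bound. One cosmetic slip: in your upper bound you write $U_\bC$ for the block-encoding, but in the theorem statement $U_\bC$ denotes the state-preparation oracle; since $\bC$ is itself unitary, the block-encoding is simply $\bC$ with no ancilla, which also makes your $\ket{\perp}=0$ step immediate.
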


The following result is motivated to show a separation
when solving `multiple' instances of the quantum linear systems problem. In that case, rather than applying $1/\bA$ to some initial vector, we apply it to a matrix that encodes all instances. We show that this model can  be more powerful than the one where each vector can only be queried `individually'.

\begin{theorem}
\label{thm:separation2}
    Let $\bC=\sum_{j,k=1}^N c_{jk}\ket j \bra k \in \mathbb C^{N \times N}$
    be block-encoded by a unitary matrix $U_{\bC/\alpha}$ for constant $\alpha >0$, i.e., $\|\bC\|\le \alpha$. Let $\ket{C_k}=\sum_j c_{jk} \ket j$ be the columns of $\bC$ and assume they have bounded Euclidean norm $\in \{1/\sqrt 2,1\}$. 
    Let $V_\bC$ be the unitary oracle that performs the map
    \begin{align}
       V_\bC \ket {k,0} \mapsto 
     \frac 1 {\|\ket{C_k}\|}  \ket{k,C_k}\;.
    \end{align}
    Then, $V_\bC$ can be implemented with a constant number of queries to $U_{\bC/\alpha}$ and it inverse, while $U_{\bC/\alpha}$ requires $\Omega(\sqrt N)$  
    queries to  $V_\bC$ and its inverse.
\end{theorem}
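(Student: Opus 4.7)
\textbf{Easy direction (constructing $V_\bC$ from $U_{\bC/\alpha}$).} My plan is to implement $V_\bC$ using a single call to $U_{\bC/\alpha}$ followed by $O(1)$ rounds of fixed-point amplitude amplification. Starting from $\ket{k}_a \ket{0}_b \ket{0}_{\rm anc}$, where $\ket{0}_{\rm anc}$ is the block-encoding ancilla, first copy register $a$ into register $b$ with CNOTs to obtain $\ket{k}_a \ket{k}_b \ket{0}_{\rm anc}$, then apply $U_{\bC/\alpha}$ on registers $(b,\mathrm{anc})$. This produces
\begin{align*}
\ket{k}_a \otimes \Bigl(\tfrac{1}{\alpha}\ket{C_k}_b\ket{0}_{\rm anc} + \ket{\perp}\Bigr),
\end{align*}
with $\ket{\perp}$ orthogonal to the $\ket{0}_{\rm anc}$ subspace. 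The amplitude of the flagged component is $\|\ket{C_k}\|/\alpha \ge 1/(\alpha \sqrt 2)$, a positive constant independent of $k$ and $N$, so a constant number of fixed-point amplitude amplification rounds boost it arbitrarily close to $1$. The resulting state equals $\ket{k}_a \ket{C_k/\|\ket{C_k}\|}_b \ket{0}_{\rm anc}$ up to any prescribed accuracy, matching the specification of $V_\bC \ket{k,0}$.

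\textbf{Hard direction (lower bound on $V_\bC$ queries).} For the lower bound I plan a Grover-type reduction. Consider the family of instances $\{\bC^{(k^*)}\}_{k^* \in [N]}$ defined by $C_k^{(k^*)} = (1/\sqrt 2)\ket{k}$ for $k \ne k^*$ and $C^{(k^*)}_{k^*} = \ket{k^*}$. Then every normalized column equals $\ket{k}$, so $V_{\bC^{(k^*)}}\ket{k,0} = \ket{k,k}$ on the whole specified subspace independently of $k^*$; but $U_{\bC^{(k^*)}/\alpha}\ket{k,0}_{\rm anc}$ has amplitude $1/\alpha$ on $\ket{k^*,0}_{\rm anc}$ versus $1/(\alpha\sqrt 2)$ on $\ket{k,0}_{\rm anc}$ for $k \ne k^*$. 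A single application of the block-encoding to the uniform superposition $\frac{1}{\sqrt N}\sum_k \ket{k,0}_{\rm anc}$, together with constant-precision amplitude estimation on the indicator of a marked index, recovers $k^*$ with $O(1)$ queries to $U_{\bC^{(k^*)}/\alpha}$. Now suppose an algorithm uses $T$ queries to $V_{\bC^{(k^*)}}$ (and its inverse) to produce an $\epsilon$-accurate block-encoding of $\bC^{(k^*)}/\alpha$ for some small constant $\epsilon$. Composing with the $O(1)$-query discovery step yields an algorithm that solves the $k^*$-search problem using $O(T)$ queries to $V_{\bC^{(k^*)}}$; since Grover's lower bound forces $\Omega(\sqrt N)$ queries, we conclude $T = \Omega(\sqrt N)$.

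The main obstacle is that the theorem specifies $V_\bC$ only on $\{\ket{k,0}\}_k$, leaving freedom in the unitary extension off this subspace, and any information about $k^*$ available to the algorithm must come from that extension. My plan is to close this gap using the Bennett-Bernstein-Brassard-Vazirani hybrid argument adapted to the present setting: since the specified subspace contains no $k^*$-dependent information, the algorithm's total distinguishing advantage after $T$ queries is controlled by a Grover-type potential function on the $k^*$-dependent part of the extension, yielding the standard $\Omega(\sqrt N)$ bound uniformly over all valid extensions. This reduces the lower bound to the Grover query lower bound applied in its usual form.
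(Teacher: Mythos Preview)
Your easy direction is correct and essentially the same as the paper's. The hard direction has two genuine gaps.

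First, in your family $\{\bC^{(k^*)}\}$ every normalized column equals $\ket{k}$ regardless of $k^*$, so $V_{\bC^{(k^*)}}$ acts identically on the specified subspace $\{\ket{k,0}\}$ for all instances. There is no Grover structure to exploit: the BBBV hybrid argument bounds distinguishing power for oracles that differ in one place, but yours do not differ at all on the specified inputs. Your proposed fix via ``the $k^*$-dependent part of the extension'' does not work, because the extension is not determined by $\bC$; the adversary may pick the same extension for every $k^*$, making all $V_{\bC^{(k^*)}}$ literally the same unitary. That yields a vacuous (infinite) lower bound rather than the Grover-type argument you sketch, and certainly not one that holds ``uniformly over all valid extensions'' (an extension that encodes $k^*$ explicitly would allow $O(1)$ queries). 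Second, your claim that $k^*$ is recoverable in $O(1)$ block-encoding queries is false: $\bC^{(k^*)}=\tfrac{1}{\sqrt 2}\one_N + (1-\tfrac{1}{\sqrt 2})\ket{k^*}\!\bra{k^*}$, and locating the support of a rank-one perturbation from a block-encoding with constant normalization requires $\Omega(\sqrt N)$ queries (this is essentially the paper's Theorem~\ref{thm:separation3}). So even ignoring the first issue, the reduction would only give $T\cdot\sqrt N\ge \Omega(\sqrt N)$, i.e.\ $T=\Omega(1)$.

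The paper avoids both problems by choosing instances where the normalized columns themselves encode the search oracle: the marked column $\sigma$ is $\tfrac{1}{\sqrt 2}\ket{0}$ while every unmarked column $k\ne 0,\sigma$ is $\ket{k}$, so $V_\bC\ket{k,0}=\ket{k,k}$ for $k\ne\sigma$ and $V_\bC\ket{\sigma,0}=\ket{\sigma,0}$, which is precisely a Grover oracle on the specified subspace. Moreover $\bC^\dagger\ket{0}=\tfrac{1}{\sqrt 2}(\ket 0+\ket\sigma)$, so one use of the inverse block-encoding plus a computational-basis measurement reveals $\sigma$ with probability $1/2$, and the standard search lower bound applies directly.
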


Last, we show a separation in the other direction. For certain problems, often related to sampling, having access to the state can be more powerful than having access to the block-encoding.

\begin{theorem}
\label{thm:separation3}
    Let $\bC=\sum_{j,k=1}^N c_{jk}\ket j \bra k \in \mathbb C^{N \times N}$
    be a diagonal matrix where all entries are 0 except for one entry being 1, and $U_\bC$ its block-encoding. Let $\dket C = \sum_{j,k=1}^N c_{jk}\ket{j,k}$ be the vectorization of $\bC$ and assume access to a unitary that prepares this unit state, i.e., $V_\bC \ket 0 = \dket C$. Consider the problem of finding the location of the 1 in $\bC$. Then, it requires $\Omega(\sqrt N)$ queries to  $U_\bC$ to solve this problem while a single copy of $\dket C$ readily determines the location.
\end{theorem}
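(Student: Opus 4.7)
\textbf{Proof proposal for Theorem \ref{thm:separation3}.} The upper bound on the state-access side is immediate: since $\bC$ has a single nonzero entry, equal to $1$ and located at some position $(i,i)$, the vectorization $\dket C = \ket i \otimes \ket i$ is already a unit vector. A single application of $V_\bC$ to $\ket 0$ produces this state, and measurement of either register in the computational basis returns the marked index $i$ with probability one. This gives the easy direction of the separation.

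For the block-encoding lower bound, the plan is to reduce unstructured search to the problem of locating $i$ given query access to $U_\bC$. Any valid block-encoding of the rank-one projector $\bC = \ket i \bra i$ is equivalent, up to $O(1)$ ancilla manipulations, to a standard Grover marking oracle for $i$. Concretely, I would exhibit both directions of the reduction: (i) given a phase or bit-flip oracle $O_i$ that marks $\ket i$, one constructs a unitary $U_\bC$ satisfying $(\bra 0 \otimes I)\,U_\bC\,(\ket 0 \otimes I) = \ket i \bra i$ using $O(1)$ queries to $O_i$ and a few ancilla gates; and (ii) conversely, $U_\bC$ (together with its inverse and controlled versions) simulates one query to $O_i$ with constant overhead, since the top-left block acts as the projector onto $\ket i$ and a single application isolates the marked branch on the ancilla.

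Given these two reductions, any quantum algorithm using $T$ queries to $U_\bC$ (including inverses and controlled versions) that outputs $i$ with success probability at least $2/3$ yields an unstructured-search algorithm using $O(T)$ queries to a standard marking oracle. The optimality of Grover's algorithm then forces $T = \Omega(\sqrt N)$, completing the separation.

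The main obstacle, while not deep, is formalizing the $O(1)$-overhead equivalence between block-encoding queries and marking-oracle queries so that ancilla registers, inverses, and controlled variants are all correctly accounted for, and ensuring that no extra structural information about $i$ is smuggled in by the specific choice of block-encoding $U_\bC$ used in the reduction. Once that bookkeeping is handled cleanly, the lower bound follows from the standard adversary or polynomial-method argument applied to unstructured search.
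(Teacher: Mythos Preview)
Your proposal is correct and follows the same approach as the paper: identify the problem with unstructured search and invoke the Grover/BBBV lower bound, while noting that a computational-basis measurement of $\dket C=\ket i\ket i$ reveals $i$ with certainty. The paper's own proof is a single sentence to this effect; you have simply spelled out the reduction more explicitly (and note that only your direction (i)---building a block-encoding of $\ket i\bra i$ from one call to the marking oracle---is needed for the lower bound, so direction (ii) can be dropped).
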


This is the search problem and the quantum lower bound $\Omega(\sqrt N)$ is well-known~\cite{BHMT02}, while a measurement of $\dket C$ readily gives the location.

\section{Connection to Riccati equation}

The algebraic Riccati equation
is the non-linear matrix equation
${\bf B}^\rT {\bf X} + {\bf X}{\bf B} = {\bf C} + {\bf X} {\bf D} {\bf X}$,
where the solution $\bX$ and also $\bC$ are symmetric matrices.
When the nonlinear term vanishes and $\bf D=\bf 0$, this is a Sylvester equation. This Riccati equation is generalized
in several ways such as
\begin{equation}
\label{eq:riccati2}
    {\bf B}^\dagger {\bf X} + {\bf X}^\dagger {\bf B} = {\bf C} + {\bf X}^\dagger {\bf D} {\bf X} \; ,
\end{equation}
which is the case considered in 
Ref.~\cite{Liu2025}. Again, if ${\bf D}=\bf 0$
and $\bB$ and $\bC$ are Hermitian,
  then we can solve for $\bB \bX + \bX \bB=\bC$,
  which is also the Sylvester equation for $\bA=\bB$.

More generally, the solution of Eq.~\eqref{eq:riccati2} is given by \cite[Lemma 5]{Liu2025}
\begin{equation}
    {\bf X} = {\bf D}^{-1}{\bf B} \pm {\bf D}^{-1} \# ({\bf B}^\dagger {\bf D}^{-1} {\bf B} - {\bf C}) \, ,
\end{equation}
where $\#$ indicates the matrix geometric mean, which is
\begin{equation}
    {\bf D}^{-1} \# {\bf C} = {\bf D}^{-1/2} ({\bf D}^{1/2} {\bf C} {\bf D}^{1/2})^{1/2} {\bf D}^{-1/2} \, .
\end{equation}
The conditions for this solution are that ${\bf D}$ and $-{\bf C}$ are positive definite, and ${\bf D}^{-1}{\bf B}$ is Hermitian.

The solution in terms of the matrix geometric mean is fundamentally different from the approaches we use to solve the Sylvester equation here, but it can be used to derive the formula for positive matrices in the special case that $\bA=\bB$.
The solution of the algebraic Riccati equation is analogous to the usual solution to scalar quadratic equations, and the solution of the linear equation can similarly be obtained in the limit of small quadratic term.
In the case where ${\bf D}=D\one_N$, for $D \ll 1$, the minus solution simplifies to
\begin{align}
    &{D}^{-1}\left[{\bf B} - ({\bf B}^\dagger {\bf B} - D{\bf C})^{1/2} \right]\nonumber \\
&={D}^{-1}\left[{\bf B} - ({\bf B}^\dagger {\bf B})^{1/2}+D\int_0^\infty dt \, e^{-t{\bf F}}{\bf C}e^{-t{\bf F}} \right]\nonumber \\
& \quad + \mathcal{O}(D) \, ,
\end{align}
where ${\bf F}=({\bf B}^\dagger {\bf B})^{1/2}$ and the integral expression is obtained from the Taylor expansion of the square root matrix function \cite{Moral2018}.
For positive definite ${\bf B}$ and taking the limit $D \rightarrow 0$, this gives the solution
\begin{equation}
    {\bf X} = \int_0^\infty \rd t \, e^{-t{\bf B}}{\bf C}e^{-t{\bf B}} \, .
\end{equation}
This case is equivalent to the Lyapunov equation, and a similar integral form was used in Ref.~\cite{clayton2024differentiable}.
The Sylvester equation is more general, and for $\bA$ and $\bB$ with positive Hermitian part the solution is of the form
\begin{align}
    \bX = \int_0^\infty \rd t \; e^{-t{\bA}}\bC e^{-t{\bB}} \, .
\end{align}

An even more general form is the nonsymmetric algebraic Riccati equation (NARE) \cite{Bini2010}
\begin{equation}
    {\bf A} {\bf X} + {\bf X} {\bf B} = {\bf C} + {\bf X} {\bf D} {\bf X} \, ,
\end{equation}
which includes the Sylvester equation as a special case.
Unlike the algebraic Riccati equation, there is no known closed-form solution of the NARE.
In particular, there are no known solutions of the NARE or Sylvester equation in terms of the matrix geometric mean.
This means that there does not appear to be a way of applying the approach of Ref.~\cite{Liu2025} to the Sylvester equation studied here.

\section{Discussion}

We presented efficient quantum circuits for solving linear matrix equations of the Sylvester type,  expanding the range of applications of quantum computers to linear algebra problems.
Similar to Refs.~\cite{Liu2025,clayton2024differentiable}, but in contrast to most prior work (e.g., HHL) whose goal is to prepare a quantum state, the goal of the QLME problem is to give a classical, efficient description of a unitary block-encoding of the solution matrix. This block-encoding model enables certain computations, such as estimating matrix entries, exponentially faster than the state access model as we showed.

Several open questions remain. One involves identifying specific applications and conducting thorough end-to-end analyses of the algorithm to demonstrate quantum advantage in real-world problems. Although the circuits can be used to efficiently solve BQP-complete problems, with the potential of an exponential quantum speedup, this pertains to worst-case instances that may not often occur in practical settings. For example, while classical algorithms take time polynomial in $N$ to solve the Sylvester equation, such algorithms might be improved in examples where the matrices are well-conditioned and model systems with geometrically local interactions. Other open problems include relaxing the assumptions on the matrices to provide a more general solution, or considering generalizations of the quantum algorithm to more general matrix equations, or even generalizing the theory of QSVT~\cite{GSYW18} to this domain. 
Furthermore, we do not expect the presented algorithm to be query or gate optimal for some tasks, like state preparation (e.g., preparing $\bX \ket i/\|\bX \ket i\|$), where techniques like variable time amplitude amplification~\cite{Amb12} or adiabatic evolution~\cite{SSO19} could be useful. Constructing improved algorithms and proving tight lower bounds remain as relevant problems.

\section{Acknowledgements}

We thank Robin Kothari for discussions.
DWB worked on this project under a sponsored research agreement with Google Quantum AI.
DWB is supported by Australian Research Council Discovery Project DP220101602.

%

\newpage

\onecolumngrid

\appendix

\section{From rectangular to square matrices}
\label{app:squarematrix}

In general, the solution to 
Sylvester equation is a potentially rectangular matrix  
${\bf X}\in \mathbb C^{M \times N}$, with $M \ne N$.
We present a simple reduction of this case to a Sylvester equation where the solution is a square matrix.
For simplicity we assume that initially $M \le N$.
(The analysis for the  case $M \ge N$ is similar.)

Without loss of generality, $\|{\bf A}\|\le 1/2$
and $\|{\bf B}\|\le 1/2$. Then, $\gamma_j + \lambda_k \in [-1,1]$. We can choose a constant $c>1$ and define
\begin{align}
    {\bf A}':= \begin{pmatrix}
        {\bf A } & {\bf 0} \cr {\bf 0} & c \one_{N-M}
    \end{pmatrix} \;,
\end{align}
where ${\bf 0}$ is an all-zero matrix whose dimension is clear from context. Note that ${\bf A}' \in \mathbb C^{N \times N}$ and $\bA'$ is diagonalizable is $\bA$ is. Consider now the modified Sylvester equation
\begin{align}
\label{eq:appmodifiedSylvester}
   {\bf A}' {\bf X}'  + {\bf X}' {\bf B} = \bC' = \begin{pmatrix}
       \bC \cr {\bf 0} 
   \end{pmatrix}\;,
\end{align}
where $\bC \in \mathbb C^{M \times N}$, $\bC' \in \mathbb C^{N \times N}$, and the solution is also a square matrix ${\bf X}' \in \mathbb C^{N \times N}$.
Note that we can write
\begin{align}
  {\bf X}' = \begin{pmatrix}
       {\bf X}_1 \cr {\bf X}_2 
   \end{pmatrix}  \;,
\end{align}
where ${\bf X}_1 \in \mathbb C^{M \times N}$
and ${\bf X}_2 \in \mathbb C^{(N-M) \times N}$.
From Eq.~\eqref{eq:appmodifiedSylvester} we obtain
\begin{align}
    \begin{pmatrix}
        {\bf A}{\bf X}_1 + {\bf X}_1 {\bf B} \cr c {\bf X}_2  + {\bf X}_2 {\bf B}
    \end{pmatrix} =
    \begin{pmatrix}
       \bC \cr {\bf 0} 
   \end{pmatrix} \;,
\end{align}
and a solution is then
\begin{align}
   {\bf X}' = \begin{pmatrix}
       {\bf X} \cr {\bf 0} 
   \end{pmatrix}  \;.
\end{align}
Furthermore, if we assume that $\gamma_j + \lambda_k \ne 0$
so that the original Sylvester equation had a unique solution, we can choose $c>1$ so that $c + \lambda_k > 0$
as well, implying that Eq.~\eqref{eq:appmodifiedSylvester}
has also a unique solution. (Note that the eigenvalues of ${\bf A}'$ are the $\gamma_j$'s and $c$.) 
In addition, if we let, say, $c > 1 +1/\kappa$, where $\kappa$ is the inverse of the smallest eigenvalue (condition number) of ${\bf Q}:= {\bf A} \otimes \one_N + \one_M \otimes {\bf B}$, then the inverse of the smallest eigenvalue (condition number) of 
${\bf Q}':= {\bf A}' \otimes \one_N + \one_N \otimes {\bf B}$ is only a mild constant factor  
bigger than $\kappa$.

Hence, without loss of generality, it suffices to consider square, diagonalizable matrices of the same dimension for our problem.

\section{Complexity of quantum circuits}
\label{app:complexity}

We provide the proof of our main result in Thm.~\ref{thm:main}. 
To prove the error bounds, we find it easier to prove the corresponding error bounds using vectorization, although some results can be made tighter as we will explain. The following property between norms will be useful.
\begin{lemma}
\label{lem:frobeniuserror}
Let ${\bf Q}\in \mathbb C^{N^2 \times N^2}$ be invertible, 
$\bC\in \mathbb C^{N \times N}$, and 
$\dket{C} \in \mathbb C^{N^2}$ be its vectorized form.
Define $\dket X:= \frac 1 {\bf Q} \dket{C}\in \mathbb C^{N^2}$ and  $\dket {X'}:= h({\bf Q}) \dket{C}\in \mathbb C^{N^2}$, where $h({\bf Q})$ is an approximation to $\frac 1 {\bf Q}$ that satisfies
\begin{align}
\label{eq:lemmainverseapprox}
h(\bQ)=(\one_{N^2}-\varepsilon(\bQ))\frac 1 {\bQ}  , \quad \| \varepsilon(\bQ)\|\le \varepsilon \;,
\end{align}
for some multiplicative error $\varepsilon>0$. Then, if 
${\bf X}\in \mathbb C^{N \times N}$
and ${\bf X'}\in \mathbb C^{N \times N}$ are the corresponding matrix versions of $\dket X$ and $\dket{X'}$, we obtain  
\begin{align}
    \|{\bf X}-{\bf X}'\| \le \varepsilon \sqrt N \|{\bf X}\| \;
\end{align}
for the spectral norms.
\end{lemma}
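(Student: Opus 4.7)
The plan is to exploit the elementary identity between vectorization $\ell_2$-norm and the Frobenius norm, namely $\|\dket{M}\|_2 = \|M\|_{\mathrm{Fr}}$, and then pay the unavoidable $\sqrt{N}$ factor in converting back and forth between spectral and Frobenius norms.

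First I would use the hypothesis \eqref{eq:lemmainverseapprox} to write
\begin{align}
\dket{X'} - \dket{X} = \bigl(h(\bQ) - \bQ^{-1}\bigr) \dket{C} = -\varepsilon(\bQ)\, \bQ^{-1} \dket{C} = -\varepsilon(\bQ) \dket{X},
\end{align}
so that by submultiplicativity of the spectral norm acting on a vector,
\begin{align}
\|\dket{X'} - \dket{X}\|_2 \le \|\varepsilon(\bQ)\| \cdot \|\dket{X}\|_2 \le \varepsilon \|\dket{X}\|_2.
\end{align}

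Next I would translate this vectorized bound back to a matrix-norm bound. Since vectorization is an isometry from the Hilbert--Schmidt inner product to the standard $\ell_2$ inner product, $\|\dket{X'}-\dket{X}\|_2 = \|\bX' - \bX\|_{\mathrm{Fr}}$ and $\|\dket{X}\|_2 = \|\bX\|_{\mathrm{Fr}}$, yielding $\|\bX'-\bX\|_{\mathrm{Fr}} \le \varepsilon \|\bX\|_{\mathrm{Fr}}$. Finally, the standard norm inequalities $\|M\|\le \|M\|_{\mathrm{Fr}} \le \sqrt{\mathrm{rank}(M)}\,\|M\|$ applied respectively to $\bX'-\bX$ on the left and $\bX$ on the right, together with $\mathrm{rank}(\bX)\le N$, give
\begin{align}
\|\bX'-\bX\| \le \|\bX'-\bX\|_{\mathrm{Fr}} \le \varepsilon \|\bX\|_{\mathrm{Fr}} \le \varepsilon \sqrt{N}\,\|\bX\|,
\end{align}
which is exactly the claimed bound.

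There is no real obstacle here: the lemma is essentially a bookkeeping statement relating spectral and Frobenius norms through vectorization. The only thing worth flagging is that the factor $\sqrt{N}$ is a genuine artifact of passing through the Frobenius norm, and the remark hinted at in the excerpt (\emph{``some results can be made tighter''}) likely refers to the fact that in each of the four cases of Thm.~\ref{thm:main} one can avoid this loss by bounding $\|\bX'-\bX\|$ directly from the operator-level LCU decomposition, rather than routing through $\dket{X}$. For the general purpose statement of this lemma, however, the $\sqrt{N}$ factor is unavoidable and the argument above suffices.
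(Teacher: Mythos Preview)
Your proof is correct and follows essentially the same approach as the paper: both compute $\dket{X'}-\dket{X}=-\varepsilon(\bQ)\dket{X}$, use the isometry between the Euclidean norm of a vectorization and the Frobenius norm of the matrix, and then sandwich with the standard inequalities $\|M\|\le\|M\|_{\mathrm{Fr}}\le\sqrt{N}\|M\|$. Your additional commentary on the origin and avoidability of the $\sqrt{N}$ factor is also consistent with the paper's remarks.
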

\begin{proof}
The result follows from a standard property of the Frobenius norm, where for a given ${\bf Y}=\sum_{j,k}y_{jk}\ket j \bra k\in \mathbb C^{N\times N}$,
\begin{align}
    \|{\bf Y}\| \le \|{\bf Y}\|_{\rm Fr}=(\sum_{j,k}|y_{jk}|^2)^{1/2} \le \sqrt N \|{\bf Y}\|\;.
\end{align}
We also note that if $\bf Y$ is vectorized as $\dket Y$, then the Euclidean norm is
\begin{align}
   \| \dket Y \|= \|{\bf Y}\|_{\rm Fr} \; .
\end{align}
In summary, if $h(\bf Q)$ satisfies Eq.~\eqref{eq:lemmainverseapprox}, then
\begin{align}
 \| \bX - \bX' \| \le  \| \bX - \bX' \|_{\rm Fr}   = \| \dket X - \dket {X'} \| 
 = \| \varepsilon(\bQ)  \dket X\|
 \le \varepsilon  \| \dket X\| = \epsilon   \| \bX \|_{\rm Fr} \le \varepsilon \sqrt N \|\bX\| \;.
\end{align}
\end{proof}

We will need to set $\varepsilon = \cO(\epsilon/\sqrt N)$ in the approximations that will guarantee that the normalized solution to Sylvester equation is within additive error $\epsilon$ in spectral norm. This error is exponentially small, but
since our approximations will be exponentially precise, this will incur in an overhead that is logarithmic in $N$. In some cases we might be able to drop the error scaling in $1/\sqrt N$ with refined analyses.

Our results use the LCU approach, for  which in general we can express
the approximated solution as
\begin{align}
    \frac \bX x \approx   \frac {\bX'} x :=\frac 1 x \sum_i x_i V_i (\Pi U_{\bC/\alpha} \Pi) W_i
\end{align}
for unitaries $V_i$ and $W_i$ that correspond to time evolutions with corresponding Hamiltonians determined by $\bA$ and $\bB$; i.e., a linear combination of Hamiltonian simulations. The rescaling factor is $x:=\sum_i |x_i|$. Since we will need to implement these unitaries using a method like QSP, this will incur in an additional error. Note that $\Pi=\frac 1 2 ((2\Pi-\one)+\one)$ is also a simple LCU. 
\begin{lemma}
\label{lem:evolutionerror}
    Let $V'_i$ and $W'_i$ satisfy
    \begin{align}
        \|V_i - V'_i\|\le \epsilon/4 \; , \  \|W_i - W'_i\|\le \epsilon/4 \;,
    \end{align}
    for all $i$, and $1 > \epsilon \ge 0$. Then,
    \begin{align}
        \left \|\frac {\bX'} x -\frac 1 x \sum_i x_i V'_i (\Pi U_{\bC/\alpha} \Pi) W'_i \right \|\le \epsilon \;.
    \end{align}
\end{lemma}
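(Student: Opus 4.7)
The plan is to bound the error term by term using the triangle inequality and the fact that spectral norms are submultiplicative. Concretely, set $M := \Pi U_{\bC/\alpha}\Pi$, note $\|M\|\le 1$ since it is a sub-block of a unitary, and write each single-term difference via the telescoping identity
\begin{align}
V_i M W_i - V'_i M W'_i = (V_i - V'_i)\, M\, W_i + V'_i\, M\, (W_i - W'_i).
\end{align}
Since $V_i$ and $W_i$ are unitary, $\|V_i\|=\|W_i\|=1$, and by the triangle inequality $\|V'_i\|\le \|V_i\|+\|V_i-V'_i\|\le 1+\epsilon/4$. Applying submultiplicativity to each of the two summands above then gives
\begin{align}
\bigl\|V_i M W_i - V'_i M W'_i\bigr\| \le \frac{\epsilon}{4} + \left(1+\frac{\epsilon}{4}\right)\frac{\epsilon}{4} = \frac{\epsilon}{2} + \frac{\epsilon^2}{16}.
\end{align}

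Next, I would take a convex combination. Since the coefficients $|x_i|/x$ are nonnegative and sum to $1$ (by the definition $x=\sum_i |x_i|$), the triangle inequality yields
\begin{align}
\left\|\frac{\bX'}{x} - \frac{1}{x}\sum_i x_i V'_i M W'_i \right\|
&\le \frac{1}{x}\sum_i |x_i|\, \bigl\|V_i M W_i - V'_i M W'_i\bigr\| \nonumber \\
&\le \frac{\epsilon}{2} + \frac{\epsilon^2}{16}.
\end{align}
Using the hypothesis $\epsilon<1$, so that $\epsilon^2/16<\epsilon/16$, this is strictly less than $\epsilon$, which is the claimed bound.

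There is no real obstacle here: the lemma is a clean application of norm estimates, and the only mild subtlety is that the approximate unitaries $V'_i$ need not be exactly unitary, so one must bound $\|V'_i\|$ by $1+\epsilon/4$ rather than $1$. The factor $1/4$ in the hypothesis is chosen precisely so that this overhead plus the quadratic term $\epsilon^2/16$ still fits under the target error $\epsilon$. If desired, one can also derive a slightly tighter constant by symmetrizing the telescoping identity, but this is not necessary for the statement as given.
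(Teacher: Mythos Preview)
Your proof is correct and follows essentially the same approach as the paper: the same telescoping identity $V_iMW_i-V'_iMW'_i=(V_i-V'_i)MW_i+V'_iM(W_i-W'_i)$, the same bound $\|V'_i\|\le 1+\epsilon/4$, and the same use of $\sum_i|x_i|/x=1$ to pass from per-term to global error. Your write-up is in fact slightly more explicit (you state $\|M\|\le 1$ and obtain the tighter constant $\epsilon/2+\epsilon^2/16$ where the paper records the looser $\tfrac{\epsilon}{2}(1+\epsilon)$), but the argument is the same.
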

\begin{proof}
    The result is a simple use of the triangle inequality and submultiplicative property of the spectral norm:
    \begin{align}
      \left \|\frac {\bX'} x -\frac 1 x \sum_i x_i V'_i (\Pi U_{\bC/\alpha} \Pi) W'_i \right \| & =  \left \|\frac 1 x \sum_i x_i V_i (\Pi U_{\bC/\alpha} \Pi) W_i -\frac 1 x \sum_i x_i V'_i (\Pi U_{\bC/\alpha} \Pi) W'_i \right \| \\
      & = \frac 1 x \left \|\sum_i x_i (V_i-V'_i)(\Pi U_{\bC/\alpha} \Pi) W_i  + \sum_i x_i V'_i (\Pi U_{\bC/\alpha} \Pi) (W_i-W'_i)\right \| \\
      & \le \frac 1 x \left( \frac \epsilon 4 \sum_i |x_i| + 
      \frac \epsilon 4 \sum_i |x_i| \max_i \|V'\|_i\right) \\
      & \le\frac \epsilon 2 (1+\epsilon) \\
      & \le \epsilon \;.
    \end{align}
\end{proof}
In contrast with the previous result where the precision has to be $\sim 1/\sqrt N$, this Lemma
shows that it suffices to produce the desired unitaries in the LCU within additive error $\cO(\epsilon)$.

Our results involve a discretization of the integral, and approximations using Riemann sums.
We will then make repeated use of the Poisson summation formula and the `Dirac comb', where for given $\Delta>0$ and $x \in \mathbb R$,
\begin{align}
\label{eq:Diraccomb}
    \sum_{j=-\infty}^\infty {\delta}(x-j\Delta) =\frac 1 {\Delta} \sum_{k=-\infty}^\infty
    e^{-\ri 2 \pi k x/\Delta} \;,
\end{align}
Here, $\delta(.)$ is the Dirac delta.

\subsection{Normal matrices}

In this case ${\bf Q}={\bf Q}_H+i {\bf Q}_S$ satisfies $[{\bf Q},{\bf Q}^\dagger]=[{\bf Q}_H,{\bf Q}_S]=0$.
We also assume $\|\bQ \|\le 1$, and for condition number $\kappa$, we have $\bQ \bQ^\dagger \succeq 1/\kappa^2$. The proof involves several steps and our approach is similar to the one in Ref.~\cite{CKS17}, which uses Eq.~\eqref{eq:Diraccomb}.
The goal is to obtain an approximation to the inverse of the form $h({\bf Q})=(\one_{N^2}-\varepsilon({\bf Q})) \frac 1 {\bf Q}$, where $\|\varepsilon({\bf Q})\|$ is sufficiently small.
We begin with the following result.
\begin{lemma}[Discrete-sum approximation to the inverse of a normal matrix]
\label{lem:LCUnormal}
    Let $\bQ = \bQ_H + \ri \bQ_S \in \mathbb C^{N^2 \times N^2}$ be normal with $\|\bQ\|=1$, condition number $\kappa>0$, and $\varepsilon >0$ the error. Then, there exists positive constants $c_1$, $c_2$, $c_3$, and $c_4$ such that, for  $\delta_t = c_1 \varepsilon/\sqrt{\log(\kappa/\varepsilon)} $, $\delta_\omega = c_2/(\kappa \sqrt{\log(1/\varepsilon)})$, $t_R = c_3 \kappa \sqrt{\log(1/\varepsilon)}$, and $\omega_J = c_4\sqrt{\log(\kappa/\varepsilon)}$, the normal matrix
    \begin{align}
      h(\bQ):= \frac {\ri} {2\pi}\delta_t \delta_\omega^2 \sum_{r=0}^{R-1}   \sum_{j=-J}^J  \sum_{j'=-J}^J (\omega_j - \ri \omega_{j'}) e^{-(\omega_j^2 + \omega_{j'}^2)/2} e^{-\ri t_r (\omega_j \bQ_H + \omega_{j'}\bQ_S)} \;,
    \end{align}
    where $t_r:=r \delta_t$, $\omega_j:=j \delta_\omega$, $R:=\lceil t_R/\delta_t\rceil$, and $J:=\lceil \omega_J/\delta_\omega\rceil$,
    can be written as $h({\bf Q})=(\one_{N^2}-\varepsilon({\bf Q})) \frac 1 {\bf Q}$ and
    \begin{align}
        \| \varepsilon({\bf Q}) \| \le \varepsilon \;.
    \end{align}
\end{lemma}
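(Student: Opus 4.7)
The strategy is to work in the joint eigenbasis of the commuting Hermitian operators $\bQ_H$ and $\bQ_S$ (which exists by normality) and bound the pointwise deviation $|1/q - h(q)|$ for every joint eigenvalue $q = q_H + \ri q_S$ of $\bQ$. These eigenvalues satisfy $|q|\ge 1/\kappa$ and $|q|\le \|\bQ\|=1$. From the derivation leading to Eq.~\eqref{eq:case1solution}, specialized to $\bQ$, the scalar identity
\begin{equation*}
\frac{1}{q} = \frac{1}{2\pi}\int_0^\infty \!\rd t \!\int_{-\infty}^\infty \!\rd\omega \!\int_{-\infty}^\infty \!\rd\omega' \, (\ri\omega + \omega') e^{-(\omega^2+\omega'^2)/2} e^{-\ri t(\omega q_H + \omega' q_S)}
\end{equation*}
holds, while $h(q)$ is its triple Riemann sum with truncations $t\in[0,t_R]$, $\omega,\omega'\in[-\omega_J,\omega_J]$ and steps $\delta_t,\delta_\omega$. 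I would split $|1/q - h(q)|$ via the triangle inequality into four contributions, each to be bounded by $\varepsilon/4$: (i) truncation of the $t$-integral at $t_R$; (ii) truncation of the $\omega,\omega'$-integrals at $\pm\omega_J$; (iii) Riemann-sum discretization in $t$ with step $\delta_t$; (iv) Riemann-sum discretization in $\omega,\omega'$ with step $\delta_\omega$. Once $|1/q - h(q)|\le\varepsilon$ holds uniformly, simultaneous diagonalizability gives $\|\bQ^{-1}-h(\bQ)\|\le\varepsilon$, and since $h(\bQ)=(\one_{N^2}-\varepsilon(\bQ))\bQ^{-1}$ with $\|\bQ\|\le 1$, we obtain $\|\varepsilon(\bQ)\|=\|(\bQ^{-1}-h(\bQ))\bQ\|\le\varepsilon$ as required.

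For error (i), evaluating the $\omega,\omega'$ Gaussians analytically reduces the $t$-integrand to $tq^*e^{-t^2|q|^2/2}$, so the tail from $[t_R,\infty)$ equals exactly $(1/q)e^{-t_R^2|q|^2/2}$, which is bounded by $\kappa e^{-t_R^2/(2\kappa^2)}$; the choice $t_R = c_3\kappa\sqrt{\log(1/\varepsilon)}$ with a sufficiently large $c_3$ makes this $\le\varepsilon/4$. For (ii), the Gaussian tails of $\omega,\omega'$ are bounded by $\sim e^{-\omega_J^2/2}$ times a polynomial in $\omega_J$, and the remaining $t$-integral contributes a factor $\cO(t_R)$, yielding an error $\cO(\kappa\sqrt{\log(1/\varepsilon)}\,e^{-\omega_J^2/2})$; the choice $\omega_J=c_4\sqrt{\log(\kappa/\varepsilon)}$ controls this below $\varepsilon/4$.

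For errors (iii) and (iv), I use the Poisson summation formula, Eq.~\eqref{eq:Diraccomb}: replacing a Riemann sum $\delta\sum_j f(j\delta)$ by its integral $\int f$ incurs an error equal to $\sum_{k\ne 0}\hat f(2\pi k/\delta)$. For the $\omega$-sums, the Fourier transform (at frequency $\xi$) of $\omega e^{-\omega^2/2}e^{-\ri\omega tq_H}$ is a shifted derivative of a Gaussian with magnitude bounded by $\sqrt{2\pi}|\xi-tq_H|\,e^{-(\xi-tq_H)^2/2}$; since $|tq_H|\le t_R|q|\le t_R$, the aliasing decays as $e^{-(2\pi/\delta_\omega-t_R)^2/2}$, and $\delta_\omega=c_2/(\kappa\sqrt{\log(1/\varepsilon)})$ makes this $\le\varepsilon/4$. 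For the $t$-sum, after the $\omega,\omega'$ sums have been handled, the effective $t$-integrand is (up to the already-controlled $\omega,\omega'$ errors) close to $tq^*e^{-t^2|q|^2/2}$, whose Fourier transform in $t$ is itself a Gaussian-times-polynomial of width $|q|\le 1$; Poisson summation then yields aliasing bounded by $e^{-2\pi^2/(\delta_t^2|q|^2)}$ times polynomial factors in $1/\delta_t$, which $\delta_t=c_1\varepsilon/\sqrt{\log(\kappa/\varepsilon)}$ makes $\le\varepsilon/4$.

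The main obstacle will be tracking the cross-factors carefully: the factor $t\le t_R\sim\kappa$ sits in the phase of the $\omega$-sum aliasing (forcing $\delta_\omega$ to scale as $1/\kappa$), and the amplitudes $\omega,\omega'\in[-\omega_J,\omega_J]$ appear in the $t$-sum aliasing (driving the $\sqrt{\log(\kappa/\varepsilon)}$ factor in $\delta_t$). A secondary subtlety is that errors (iii) and (iv) interact: one must either bound them jointly as a single four-dimensional quadrature error, or apply the two discretizations sequentially and show that the first error does not amplify the second — both are manageable since the integrand is uniformly bounded in operator norm and the Gaussian decay is preserved under truncation. Once all four bounds are combined via the triangle inequality and the parameter choices stated in the lemma are verified, the four $\varepsilon/4$ contributions sum to the claimed bound $\|\varepsilon(\bQ)\|\le\varepsilon$.
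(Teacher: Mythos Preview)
Your high-level strategy---pass to the joint eigenbasis, split the deviation into truncation and discretization pieces, invoke Gaussian tails and Poisson summation---is the same spirit as the paper's proof, but two concrete steps do not close with the stated parameter choices.

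First, bounding the additive quantity $|1/q-h(q)|$ and then converting via $|q|\le 1$ costs a factor of $\kappa$ that you cannot absorb. Your step~(i) gives $\kappa\,e^{-t_R^2/(2\kappa^2)}=\kappa\,\varepsilon^{c_3^2/2}$, and no absolute constant $c_3$ makes this $\le\varepsilon/4$ uniformly in $\kappa$ and $\varepsilon$ (take $\varepsilon$ near $1$ and $\kappa$ large). The paper sidesteps this by working directly with $f(\bQ):=h(\bQ)\,\bQ$ and bounding $\|\one_{N^2}-f(\bQ)\|$; the corresponding tail contribution is then $e^{-t_R^2|q|^2/2}$ with no $\kappa$ prefactor.

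Second, your step~(iii) misidentifies the $t$-integrand. Once the $\omega,\omega'$ sums are in place, what is actually being discretized in $t$ is the finite trigonometric sum $g_1(t)=\tfrac{\delta_\omega^2}{2\pi}\sum_{j,j'}(\ri\omega_j+\omega_{j'})e^{-(\omega_j^2+\omega_{j'}^2)/2}e^{-\ri t(\omega_jq_H+\omega_{j'}q_S)}$, not the Gaussian envelope $tq^*e^{-t^2|q|^2/2}$. This sum has neither a Gaussian Fourier transform (so your aliasing estimate $e^{-2\pi^2/(\delta_t^2|q|^2)}$ is unjustified) nor the decay needed for a good Riemann bound (the crude estimate is $\cO(t_R\delta_t\sup|g_1'|)=\cO(\kappa\varepsilon)$, again off by $\kappa$); and Poisson summation applies to bi-infinite sums, whereas the $r$-sum is one-sided. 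The paper handles this differently: after multiplying by $\bQ$, the coefficient in the main piece $f^{\rR}(\bQ)$ becomes $(\omega_j\bQ_H+\omega_{j'}\bQ_S)$, which matches the exponent, so the geometric $r$-sum is evaluated in closed form and the denominator $1-e^{-\ri\delta_t(\omega_j\bQ_H+\omega_{j'}\bQ_S)}$ is Taylor-expanded using $\delta_t\omega_J=\cO(\varepsilon)$. This cancellation is precisely what removes both $\kappa$ factors and is only visible once you pass to $h(\bQ)\bQ$. The paper also isolates a cross term $f^{\rI}(\bQ)$ and shows its infinite $\omega,\omega'$ sum vanishes identically by parity, so only Gaussian tail errors remain there.
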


\begin{proof}
We consider the matrix $f(\bQ)=h(\bQ) \bQ= \one_{N^2} -\varepsilon(\bQ)$ and compare it with $\one_{N^2}$. This can be written as 
$f(\bQ)= f^\rR(\bQ)+ f^\rI(\bQ)$, where in particular
\begin{align}
  f^\rR(\bQ):= \frac {\ri} {2\pi}\delta_t \delta_\omega^2 \sum_{r=0}^{R-1}   \sum_{j=-J}^J  \sum_{j'=-J}^J (\bQ_H \omega_j + \bQ_S \omega_{j'}) e^{-(\omega_j^2 + \omega_{j'}^2)/2} e^{-\ri t_r (\omega_j \bQ_H + \omega_{j'}\bQ_S)} \;.
    \end{align}
    We will show that this term is already close to $\one_{N^2}$.
    We can carry the sum over $r$ exactly
    obtaining
    \begin{align}
 f^\rR(\bQ)= \ri \delta_t  \frac {\delta_\omega^2} {2\pi} \sum_{j=-J}^J 
     \sum_{j'=-J}^J   ( \bQ_H \omega_j+ \bQ_S \omega_{j'}) e^{-(\omega_j^2+\omega_{j'}^2)/2}  \frac{\one_N - 
     e^{-\ri t_R (\omega_j \bQ_H +  \omega_{j'} \bQ_S)}}{\one_N - 
     e^{-\ri \delta_t (\omega_j \bQ_H +  \omega_{j'} \bQ_S)}} \;.
\end{align}
In this expression, the limit must be taken if
$(\omega_j \bQ_H +  \omega_{j'} \bQ_S) \rightarrow 0$.
Under the assumptions, the parameters obey $\|\delta_t (\omega_j \bQ_H +  \omega_{j'} \bQ_S)\| \le 2\delta_t \omega_J =\cO( \varepsilon)$, and we can perform a Taylor series expansion obtaining
\begin{align}
    \frac 1 {\one_N - 
     e^{-\ri \delta_t (\omega_j \bQ_H +  \omega_{j'} \bQ_S)}} = \frac 1 {\ri \delta_t (\omega_j \bQ_H +  \omega_{j'} \bQ_S) } \left( \one_N-\ri \delta_t (\omega_j \bQ_H +  \omega_{j'} \bQ_S)/2 + \ldots \right) \;.
\end{align}
Hence,
\begin{align}
     f^\rR(\bQ)=    \frac {\delta_\omega^2} {2\pi} \sum_{j=-J}^J 
     \sum_{j'=-J}^J     e^{-(\omega_j^2+\omega_{j'}^2)/2}  \left( \one_N+ \cO(\varepsilon) \right)  \left(\one_N - 
     e^{-\ri t_R (\omega_j \bQ_H +  \omega_{j'} \bQ_S)}\right) \;.
\end{align}
      Then, up to error $\cO(\varepsilon)$, our approximation $f^\rR(\bQ)$ is
     simplified to
     \begin{align}
     \label{eq:frapprox}
           \frac {\delta_\omega^2} {2\pi} \sum_{j=-J}^J 
     \sum_{j'=-J}^J    e^{-(\omega_j^2+\omega_{j'}^2)/2}  (\one_{N^2} - 
     e^{-\ri t_R (\omega_j \bQ_H +  \omega_{j'} \bQ_S)})  \;.
     \end{align}

     Our next goal is to show that this is $\cO(\varepsilon)$-close to $\one_{N^2}$.
To this end we use Eq.~\eqref{eq:Diraccomb}
and the Fourier transform for the Gaussian and show for the infinite sum
\begin{align}
\frac{   \delta_\omega}{\sqrt{2\pi}} \sum_{j=-\infty}^\infty    e^{-\omega_j^2 /2} & = \frac{   \delta_\omega}{\sqrt{2\pi}}\int_{-\infty}^\infty \rd \omega \; e^{-\omega^2 /2} \sum_{j=-\infty}^\infty \delta(\omega-\omega_j)\\
&=\sum_{k=-\infty}^\infty e^{-(2 \pi k/\delta_\omega)^2/2} \\
\label{eq:discreteGaussiansum}
& =1 + 2 \sum_{k =1}^\infty  e^{-(2 \pi k/\delta_\omega)^2/2}\;.
\end{align}
The correction to 1 is exponentially small in $1/\delta_\omega^2$. Hence, it suffices to choose some $\delta_\omega = c_2/\sqrt{\log(1/\varepsilon)}$ for some constant $c_2>0$ to obtain error $\cO(\varepsilon)$. That is,
\begin{align}
      \frac {\delta_\omega^2} {2\pi} \sum_{j=-\infty}^\infty 
     \sum_{j'=-\infty}^\infty     e^{-(\omega_j^2+\omega_{j'}^2)/2}  -\one_{N^2}=\cO(\varepsilon) \;.
\end{align}
We also consider the error by cutting off the infinite sum, which can be bounded since
\begin{align}
   \frac{   \delta_\omega}{\sqrt{2\pi}} \sum_{j:|j|>J}   e^{-\omega_j^2 /2} =\cO(  e^{-\omega_J^2 /2}) =\cO(\varepsilon/\kappa)=\cO(\varepsilon) \;,
\end{align}
under the hypothesis. These imply for the first term of Eq.~\eqref{eq:frapprox},
    \begin{align}
           \frac {\delta_\omega^2} {2\pi} \sum_{j=-J}^J 
     \sum_{j'=-J}^J    e^{-(\omega_j^2+\omega_{j'}^2)/2}  -\one_{N^2}=\cO(\varepsilon)  \;,
     \end{align}
as desired.

We now consider the second term of Eq.~\eqref{eq:frapprox}. Note that, for example, using Eq.~\eqref{eq:Diraccomb} the infinite sum obeys
\begin{align}
\frac{   \delta_\omega}{\sqrt{2\pi}} \sum_{j=-\infty}^\infty    e^{-\omega_j^2 /2} e^{-\ri t_R \omega_j \bQ_H}& = \frac{   \delta_\omega}{\sqrt{2\pi}}\int_{-\infty}^\infty \rd \omega \; e^{-\omega^2 /2} e^{-\ri t_R \omega \bQ_H} \sum_{j=-\infty}^\infty \delta(\omega-\omega_j)\\
&=\sum_{k=-\infty}^\infty e^{-(t_R \bQ_H +2 \pi k/\delta_\omega)^2/2} \\
& = e^{-(t_R \bQ_H)^2/2}+ \sum_{k\ne 0} e^{-(t_R \bQ_H +2 \pi k/\delta_\omega)^2/2} \;.
\end{align}
The second term is sufficiently small with the right choice of constants where, for example, $2\pi/\delta_\omega \ge t_R$. With this assumption the error is exponentially small in $1/\delta_\omega^2$, and hence $\cO(\varepsilon)$.
In addition, like for the previous term, we can place a cutoff in the sum at $|j|\le J$ introducing error at most $\cO(\varepsilon)$. These bounds imply for the second term of Eq.~\eqref{eq:frapprox},
\begin{align}
  \|    \frac {\delta_\omega^2} {2\pi} \sum_{j=-J}^J 
     \sum_{j'=-J}^J    e^{-(\omega_j^2+\omega_{j'}^2)/2}    
     e^{-\ri t_R (\omega_j \bQ_H +  \omega_{j'} \bQ_S)} \| =
     e^{-(t_R \bQ_H)^2/2} e^{-(t_R \bQ_S)^2/2}+\cO(\varepsilon) \;.
\end{align}
Furthermore, our assumptions imply $\| e^{-(t_R \bQ_H)^2/2} e^{-(t_R \bQ_S)^2/2}\|=\cO(\varepsilon)$, as desired.

The other term in  $f(\bQ)$ is
\begin{align}
    f^\rI(\bQ):= \frac {\ri} {2\pi}\delta_t \delta_\omega^2 \sum_{r=0}^{R-1}   \sum_{j=-J}^J  \sum_{j'=-J}^J (\ri \bQ_S \omega_j -\ri \bQ_H \omega_{j'}) e^{-(\omega_j^2 + \omega_{j'}^2)/2} e^{-\ri t_r (\omega_j \bQ_H + \omega_{j'}\bQ_S)} \;.
\end{align}
We will show this is $\cO(\varepsilon)$.
First we approximate by performing an infinite sum; for example, using Eq.~\eqref{eq:Diraccomb},
\begin{align}
  \frac{\delta_\omega}{\sqrt{2\pi}}  \sum_{j=-\infty}^\infty e^{-\omega_j^2/2} 
    e^{-\ri t_r \omega_j \bQ_H} & = \sum_{k=-\infty}^\infty e^{-(t_r \bQ_H + 2 \pi k/\delta_\omega)^2/2} \;,\\
    \frac{\delta_\omega}{\sqrt{2\pi}}  \sum_{j=-\infty}^\infty \omega_j e^{-\omega_j^2/2} 
    e^{-\ri t_r \omega_j \bQ_H} & = -\ri \sum_{k=-\infty}^\infty (t_r \bQ_H ) e^{-(t_r \bQ_H + 2 \pi k/\delta_\omega)^2/2} \;.
\end{align}
It follows that
\begin{align}
  \frac {-\ri} {2\pi}  \delta_\omega^2   
  \sum_{j=-\infty}^\infty  \sum_{j'=-\infty}^\infty (\ri \bQ_S \omega_j -\ri \bQ_H \omega_{j'}) e^{-(\omega_j^2 + \omega_{j'}^2)/2} e^{-\ri t_r (\omega_j \bQ_H + \omega_{j'}\bQ_S)}=0 \;. 
\end{align}
Hence, we can replace the sum over $j,j'$ in $f^\rI(\bQ)$ so that
\begin{align}
\nonumber
 f^\rI(\bQ)&= \frac {\ri} {2\pi}\delta_t \delta_\omega^2 \sum_{r=0}^{R-1}  \left( \sum_{j:|j|>J}  \sum_{j'=-J}^J (\ri \bQ_S \omega_j -\ri \bQ_H \omega_{j'}) e^{-(\omega_j^2 + \omega_{j'}^2)/2} e^{-\ri t_r (\omega_j \bQ_H + \omega_{j'}\bQ_S)} \right . \\ 
 &+ \left.
 \sum_{j=-\infty}^\infty  \sum_{j':|j'|>J} (\ri \bQ_S \omega_j -\ri \bQ_H \omega_{j'}) e^{-(\omega_j^2 + \omega_{j'}^2)/2} e^{-\ri t_r (\omega_j \bQ_H + \omega_{j'}\bQ_S)}\right) \;.
\end{align}
We can upper bound the spectral norm of each term individually and use the multiplicative property, obtaining
\begin{align}
\nonumber
\| f^\rI(\bQ)\|&\le \frac {1} {\pi}\delta_t R\delta_\omega^2   \left( \sum_{j:|j|>J}  \sum_{j'=-J}^J ( |\omega_j| + |\omega_{j'}|) e^{-(\omega_j^2 + \omega_{j'}^2)/2} +
 \sum_{j=-\infty}^\infty  \sum_{j':|j'|>J} ( |\omega_j| + |\omega_{j'}|) e^{-(\omega_j^2 + \omega_{j'}^2)/2}\right) \;.
 \end{align}
 Since $\delta_\omega \sum_j |\omega_j| e^{-\omega_j^2/2}=\cO(1)$ and $\delta_\omega \sum_{|j|>J} |\omega_j| e^{-\omega_j^2/2}=\cO(e^{-\omega_J^2/2})$ then
 \begin{align}
   \| f^\rI(\bQ)\| = \cO(t_R e^{-\omega_J^2/2}) \;. 
 \end{align}
 We can choose the constants such that this error is also $\cO(\varepsilon)$.
 Then, $\|f^\rI(\bQ)\|=\cO(\varepsilon)$.

It follows that $\|\varepsilon(\bQ)\| = \|\one_N -f(\bQ)\| \le \varepsilon$ with a proper choice of constants.

\end{proof}

\vspace{.5cm}

The approximation $h(\bQ)$ is readily an LCU that involves evolutions under $\bQ_H$ and $\bQ_S$.
Next we consider the $L_1$-norm for this approximation.
\begin{lemma}
    The $L_1$-norm of  $h(\bQ)$ in Lemma~\ref{lem:LCUnormal} satisfies
    \begin{align}
     y:=   \frac 1 {2\pi} \delta_t \delta_{\omega}^2 \sum_{r=0}^{R-1} \sum_{j=-J}^J \sum_{j'=-J}^J
        (|\omega_j|+|\omega_{j'}|)e^{-(\omega_j^2+\omega_{j'}^2)/2}
        = \cO(\kappa \sqrt{\log(1/\varepsilon)}) \;.
    \end{align}
\end{lemma}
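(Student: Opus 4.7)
The plan is to factor the triple sum into a product and then bound each factor separately. Observe that the summand depends on $r$ only through the implicit range, so
\begin{align}
y = \frac{1}{2\pi}\,\bigl(\delta_t R\bigr)\,\delta_\omega^2 \sum_{j=-J}^J \sum_{j'=-J}^J (|\omega_j| + |\omega_{j'}|)\,e^{-(\omega_j^2+\omega_{j'}^2)/2}.
\end{align}
By the definition of $R$, the prefactor satisfies $\delta_t R \le t_R + \delta_t = \cO(t_R) = \cO(\kappa\sqrt{\log(1/\varepsilon)})$. It therefore suffices to show that the remaining double sum is bounded by a universal constant.

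Next I would split the double sum into the two natural pieces coming from $|\omega_j|$ and $|\omega_{j'}|$, which by symmetry in $j,j'$ give equal contributions. Each piece factors as
\begin{align}
\delta_\omega^2 \sum_{j=-J}^J \sum_{j'=-J}^J |\omega_j|\,e^{-(\omega_j^2+\omega_{j'}^2)/2}
= \Bigl(\delta_\omega \sum_{j=-J}^J |\omega_j| e^{-\omega_j^2/2}\Bigr)\Bigl(\delta_\omega \sum_{j'=-J}^J e^{-\omega_{j'}^2/2}\Bigr).
\end{align}
Each factor is a Riemann sum for an absolutely convergent Gaussian integral: $\int_{-\infty}^\infty |\omega|\,e^{-\omega^2/2}\,\rd\omega = 2$ and $\int_{-\infty}^\infty e^{-\omega^2/2}\,\rd\omega = \sqrt{2\pi}$. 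The key step here is to verify that the Riemann sums are bounded by (twice) these constants uniformly in $\delta_\omega$ and $J$. This is routine because $|\omega|e^{-\omega^2/2}$ and $e^{-\omega^2/2}$ are unimodal and integrable, so their Riemann sums with step $\delta_\omega \le 1$ (which holds under our parameter choices) are within a universal constant of their integrals; the tails truncated at $|j|>J$ only decrease the sum. Hence each factor is $\cO(1)$.

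Combining the two bounds yields
\begin{align}
y = \cO(1) \cdot \delta_t R = \cO(t_R) = \cO\bigl(\kappa\sqrt{\log(1/\varepsilon)}\bigr),
\end{align}
as claimed. I do not expect any real obstacle; the only care needed is the elementary bound of a Riemann sum of a unimodal, integrable function by its integral (up to a universal constant), which can be argued by comparing each sampled value to the maximum of the function on the neighboring subinterval. The other parameters $\delta_t$, $J$, and $\omega_J$ enter only through $\delta_t R = \cO(t_R)$ and through tails that are exponentially small in $\omega_J^2$, so they do not affect the leading asymptotic.
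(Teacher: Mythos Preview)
Your proof is correct and follows essentially the same approach as the paper: both factor out $\delta_t R = \cO(t_R) = \cO(\kappa\sqrt{\log(1/\varepsilon)})$ from the $r$-sum and then show that the remaining double Gaussian sum is $\cO(1)$ by comparison with the corresponding integrals. The only cosmetic difference is that the paper invokes the Poisson summation identity (Eq.~\eqref{eq:discreteGaussiansum}) to approximate the discrete Gaussian sums, whereas you use the more elementary Riemann-sum/unimodality argument; for the purpose of an $\cO(1)$ bound either suffices.
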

\begin{proof}
    This $L_1$ norm is upper bounded by that where the sums over $j$ and $j'$ are infinite.
    For example, we can consider Eq.~\eqref{eq:discreteGaussiansum}
    that shows 
    $\frac 1 {\sqrt{2\pi}} \delta_\omega\sum_{j=-\infty}^\infty e^{-\omega_j^2/2}$
    is $\cO(\varepsilon)$-close to 1 for our choice of parameters.
    Then
    \begin{align}
        \frac 1 {\pi}\delta_{\omega}^2 \sum_{j=-\infty}^\infty \sum_{j'=-\infty}^\infty
        |\omega_j| e^{-(\omega_j^2+\omega_{j'}^2)/2} \approx
        2 \frac 1 {\sqrt{2\pi}} \delta_\omega
        \sum_{j=-\infty}^\infty 
         |\omega_j| e^{-\omega_j^2/2} \approx 
          4 \frac 1 {\sqrt{2\pi}} \int_0^\infty \rd \omega\;
          \omega  e^{-\omega^2/2}=\frac{4}{\sqrt{2\pi}}\;,
    \end{align}
    where the first approximation error is $\cO(\varepsilon)$ and the second approximation error is $\cO(\delta_\omega)$.
    Hence, the $L_1$ norm is bounded by $c R \delta_t = ct_R$, for some constant $c>0$, and this is $\cO(\kappa \sqrt{\log(1/\varepsilon)})$.
\end{proof}

We are now ready to determine the complexity of our algorithm and prove the first case in Thm.~\ref{thm:main}. Applying the inverse of vectorization, our approximate solution is
\begin{align}
   \bX \approx {\bf X}' := \frac{\ri}{2\pi} \delta_t \delta_\omega^2 \sum_{r=0}^{R-1} \sum_{j=-J}^J \sum_{j'=-J}^J (\omega_j -\ri \omega_{j'}) e^{-(\omega_j^2 + \omega_{j'}^2)^2/2} e^{-\ri t_r (\omega_j \bA_H + \omega_{j'}\bA_S)} \bC e^{-\ri t_r (\omega_j \bB_H + \omega_{j'}\bB_S)} \;.
\end{align}
If we replace $\bC$ by its block-encoding $ U_{\bC/\alpha}$, 
then this expression becomes an LCU. 
Our quantum algorithm then prepares the block-encoding of $\bX'/x$, where $x:=y \alpha =\cO(\alpha \kappa \sqrt{\log(1/\varepsilon)})$, 
\begin{align}
   U_{\bX/x}  \implies \Pi  U_{\bX/x} \Pi = \bX'/x \;.
\end{align}
This can be done via  Lemma~\ref{lem:BEfromLCU}.
Note that Lemma~\ref{lem:frobeniuserror} gives
\begin{align}
    \|\bX - \bX'\|\le \varepsilon \sqrt N \|\bX\| \implies \|\bX/x - \Pi U_{\bX/x}\Pi\| \le 
    \varepsilon \sqrt N/x \;.
\end{align}
Since the multiplicative error is $\varepsilon \sqrt N$, we can set $\varepsilon=\cO(\epsilon/\sqrt N)$ for multiplicative error $\cO(\epsilon)$. When rescaling by $x \ge \|\bX\|$, the multiplicative error becomes an additive error also $\cO(\epsilon)$. Then, $x =\cO(\alpha \kappa \sqrt{\log(N/\epsilon)})$.

The query complexity can be determined as follows. 
Since each term in the LCU uses $U_{\bC/\alpha}$ once, we can implement $ U_{\bX/x}$ with a single use of $U_{\bC/\alpha}$; see Fig.~\ref{fig:normalcase} for an example. The query complexity will then be dominated by the largest evolution time in the LCU, which for this case is linear in
$t_R \omega_J = \cO(\kappa \log(\kappa/\varepsilon)$. Using QSP to simulate each time evolution within precision $\cO(\epsilon)$ from Lemma~\ref{lem:evolutionerror}, the overall query complexity is then
\begin{align}
    Q= \cO \left(\kappa \log (\kappa/\varepsilon) + \log(1/\epsilon)\right) 
   =\cO \left(\kappa \log(  \kappa N/ \epsilon) \right) \;.
\end{align}

The additional gate complexity is to implement $V$,
which is the unitary acting on an ancillary register
that sets the weights in the LCU. Note that we can express
\begin{align}
    U_{\bX/x}=\frac 1 x \sum_{i} x_i V_i U_{\bC/\alpha} W_i 
\end{align}
where the $V_i$'s and $W_i$'s are the time evolutions. We can assume $x_i \ge 0$ by absorbing the phase in the definition of the unitaries.
Then,
\begin{align}
    V \ket 0 \mapsto \frac 1 {\sqrt x} \sum_i \sqrt {x_i}\ket i \;.
\end{align}
 For this case the register $\ket i$ contains three registers, one to encode the times $t_r$, another to encode the frequencies $\omega_j$, and another for the frequencies $\omega_{j'}$. The time register can be prepared in a uniform superposition with $\log_2 R$ gates by simply applying Hadamard gates. The frequency registers are in states with Gaussian-like amplitudes or amplitudes that correspond to the first Hermite functions. We can use the results in Ref.~\cite{Som15} to prepare these states with complexity $\cO(\log(J/\epsilon))$, since it suffices to prepare these states within error $\epsilon$.
Hence, the gate complexity is
\begin{align}
    G=\cO( Q+\log(RJ/\epsilon))=\cO(Q+\log(\kappa/\varepsilon))
    =
    \cO(\kappa \log (   \kappa N/ \epsilon))\;.
\end{align}
The term $Q$ is due to the gates in QSP.

\subsection{Matrices with positive Hermitian part}

In this case $\bQ=\bQ_H + \ri \bQ_S$ and we assume $\bQ_H \succ 0$ to be strictly positive. In fact,
we are going to assume $\bQ_H \succeq (1/\kappa)\one_{N^2}$, for some $\kappa>0$ that 
is related to the condition number since 
$\|\bQ\|\le \|\bA\|+\|\bB\|\le 1$. 
Equation~\eqref{eq:inversewithexponential} gives
\begin{align}
    \dket{X}=\int_0^\infty \rd t \; e^{-t \bQ} \dket C \implies \bX = \int_0^\infty \rd t \; e^{-t{\bA}}\bC e^{-t{\bB}} \;.
\end{align}
Hence, we form an approximation $\tilde{\bf X}$ of ${\bf X}$ by truncating the integral and discretizing it by a Riemann sum. The truncation we require is given by the following result.
\begin{lemma}[Approximation to $\bX$ for the case of a matrix with positive Hermitian part]
\label{lem:LCUpositiveHermitian}
    Let $\bQ = \bQ_H + \ri \bQ_S \in \mathbb C^{N^2 \times N^2}$ have a positive Hermitian component $\bQ_H\succeq\frac{1}{\kappa}\one_{N^2}$, where $\kappa>0$, and $\epsilon >0$ the error. Then, there exists positive constants $c_1$, $c_2$, $c_3$, $c_4$, and $\beta\in(0,1)$ such that, for  $\delta_t = c_1 \epsilon/(\kappa\|{\bf C}\|) $, $\delta_\omega = c_2\epsilon/(\omega_J\|{\bf C}\|t_R^2)$, $t_R = c_3 \kappa \log\frac{\kappa\|{\bf C}\|}{\epsilon}$, and $\omega_J =c_4\log^{1/\beta}\frac{t_R\|{\bf C}\|}{\epsilon}$, the matrix
    \begin{align}
      \tilde \bX:= 
     \delta_t\delta_\omega\sum_{r=0}^{R-1}\sum_{j=-J}^J\hat{f}(\omega_j)e^{-\ri ({\bf A}_H\omega_j+{\bf A}_S)t_r}{\bf C}e^{-\ri ({\bf B}_H\omega_j+{\bf B}_S)t_r},
    \end{align}
    where $t_r:=r \delta_t$, $\omega_j:=j \delta_\omega$, $R:=\lceil t_R/\delta_t\rceil$, and $J:=\lceil \omega_J/\delta_\omega\rceil$,
     and if $\epsilon({\bf{X}}):=\tilde \bX - \bX$,
    \begin{align}
        \|\epsilon({\bf{X}}) \| \le \epsilon \;.
    \end{align}
\end{lemma}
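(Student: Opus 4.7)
The plan is to bound $\|\tilde\bX-\bX\|$ by splitting the difference into four controllable pieces, starting from the integral representation $\bX=\int_0^\infty\!\rd t\, e^{-t\bA}\bC e^{-t\bB}$ (equivalently $\dket{X}=\int_0^\infty\!\rd t\, e^{-t\bQ}\dket{C}$), and the Fourier representation of $e^{-t\bQ}$ from Ref.~\cite{an2023quantum}. The four sources of error I would isolate and bound separately are: (i) truncating the $t$-integral at $t_R$; (ii) replacing $e^{-t\bQ}$ by a truncation at $|\omega|\le\omega_J$ of $\int f(\omega)e^{-\ri t(\omega\bQ_H+\bQ_S)}\rd\omega$; (iii) discretizing the $t$-integral with spacing $\delta_t$; (iv) discretizing the $\omega$-integral with spacing $\delta_\omega$.

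For step (i) I would use the key semigroup contraction estimate: since $\bQ_H\succeq(1/\kappa)\one$, the function $w(t):=\|e^{-t\bQ}v\|^2$ satisfies $w'(t)=-2\langle e^{-t\bQ}v,\bQ_H e^{-t\bQ}v\rangle\le-(2/\kappa)w(t)$, hence $\|e^{-t\bQ}\|\le e^{-t/\kappa}$. The truncation tail is therefore bounded by $\|\bC\|\int_{t_R}^\infty e^{-t/\kappa}\rd t=\kappa\|\bC\|e^{-t_R/\kappa}$, which is $\mathcal{O}(\epsilon)$ for $t_R=c_3\kappa\log(\kappa\|\bC\|/\epsilon)$. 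For step (ii), I would use the Schwartz-type function $f$ of Ref.~\cite{an2023quantum} whose tails decay as $e^{-|\omega|^\beta}$ for some $\beta\in(0,1)$; multiplying by the length $t_R$ of the remaining $t$-interval and by $\|\bC\|$, the truncation error is $\mathcal{O}(t_R\|\bC\|e^{-\omega_J^\beta})$, which is $\mathcal{O}(\epsilon)$ for $\omega_J=c_4\log^{1/\beta}(t_R\|\bC\|/\epsilon)$.

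For steps (iii) and (iv) I would apply the standard smooth-integrand Riemann-sum bound: for a twice differentiable integrand $g$ on an interval of length $L$ with step $\delta$, the midpoint/left-endpoint error is $\mathcal{O}(L\delta\,\|g'\|_\infty)$ (and $\mathcal{O}(L\delta^2\|g''\|_\infty)$ for trapezoidal); here $\partial_t$ of the integrand brings a factor $\|\omega\bA_H+\bA_S\|+\|\omega\bB_H+\bB_S\|=\mathcal{O}(\omega_J)$ uniformly in the stated range, while $\partial_\omega$ brings a factor $\mathcal{O}(t_R)$. Tracking the lengths $t_R$ and $\omega_J$ together with $\|\bC\|$ and the $L^1$-mass of $f$, the stated choices $\delta_t=c_1\epsilon/(\kappa\|\bC\|)$ and $\delta_\omega=c_2\epsilon/(\omega_J\|\bC\|t_R^2)$ make each of these pieces $\mathcal{O}(\epsilon)$. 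Summing the four pieces with the triangle inequality in operator norm, and folding absolute constants into $c_1,\ldots,c_4$, yields $\|\tilde\bX-\bX\|\le\epsilon$.

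The main obstacle I expect is bookkeeping the $\omega$-discretization error correctly. Because the integrand contains $e^{-\ri\omega t\bQ_H}$ and we must integrate over $t\in[0,t_R]$, the derivative with respect to $\omega$ carries a factor of $t$; consequently the naive Riemann bound in $\omega$ picks up an extra $t_R$, and after integrating in $t$ an additional $t_R$, explaining the $t_R^2$ appearing in $\delta_\omega$. A secondary technical point is the precise $\beta$: it must be the decay rate produced by the construction in Ref.~\cite{an2023quantum}, and one must verify that $f$ and its first derivative are absolutely integrable with the claimed rate so that both (ii) and (iv) behave as stated. Finally, because the estimate is done directly in spectral norm on $\bX$ (not via vectorization), no $\sqrt N$ overhead of Lemma~\ref{lem:frobeniuserror} is incurred, which is why the stated $\delta_t,\delta_\omega,t_R,\omega_J$ depend only on $\kappa,\|\bC\|,\epsilon$ and not on $N$.
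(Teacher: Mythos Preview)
Your proposal is correct and follows essentially the same four–step triangle-inequality strategy as the paper: truncate in $t$ using $\|e^{-t\bQ}\|\le e^{-t/\kappa}$, invoke the LCHS kernel $\hat f$ of Ref.~\cite{an2023quantum} and truncate its $\omega$-tail, then discretize both integrals by Riemann sums.

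The one substantive difference is the order of your steps (ii) and (iii). The paper discretizes in $t$ \emph{before} inserting the LCHS representation, i.e.\ it applies the Riemann bound directly to $e^{-t\bA}\bC e^{-t\bB}$. This lets it use the decay $\|e^{-t_r\bA}\|\|e^{-t_r\bB}\|\le e^{-t_r/\kappa}$ inside the sum over $r$, so that $\delta_t\sum_r e^{-t_r/\kappa}=\cO(\kappa)$ replaces the naive factor $t_R$, and no $\omega_J$ appears; this is what yields exactly the stated $\delta_t=c_1\epsilon/(\kappa\|\bC\|)$. In your ordering the post-LCHS integrand is unitary, so the $t$-Riemann bound picks up $t_R\omega_J$ instead of $\kappa$, forcing a $\delta_t$ smaller by polylogarithmic factors than the one in the lemma statement. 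This is harmless for the query complexity (as the paper itself remarks after the proof), but if you want to recover the stated $\delta_t$ on the nose you should swap (ii) and (iii).
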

\begin{proof}
We first form an approximation ${\bf X}'$ by truncating the integral to finite $T< \infty$: 
\begin{align} 
\|{\bf X}'-{\bf X}\|&=\left\|\int_0^{T} \rd t \; e^{-t{\bf A}}\bC e^{-t{\bf B}} -\int_0^\infty \rd t \; e^{-t{\bf A}}\bC e^{-t{\bf B}} \right\|
\\
&=\left\|\int_{T}^\infty \rd t \; e^{-t{\bf A}}\bC e^{-t{\bf B}} \right\|  \\
&\le\|\bC\|\int_{T}^\infty \rd t \; \| e^{-t{\bf A}}\|\|e^{-t{\bf B}}\| \\
& =\|\bC\|\int_{T}^\infty \rd t \; \| e^{-t{\bf Q}}\|\;,
\end{align}
where the last equality follows from
\begin{align}
\|e^{-t{\bf Q}}\|=\|e^{-t(({\bf A}_H+i{\bf A}_S)\otimes\one_N+\one_N\otimes({\bf B}_H^\rT+i{\bf B}_S^\rT))}\|=\|e^{-t({\bf A}_H+i{\bf A}_S)}\|\|e^{-t({\bf B}_H^\rT+i{\bf B}_S^\rT)}\|
=\|e^{-t{\bf A}}\|\|e^{-t{\bf B}^\rT }\|
=\|e^{-t{\bf A}}\|\|e^{-t{\bf B}}\|\;.
\end{align}
By assumption, ${\bf Q}\succeq\frac{1}{\kappa}\one_{N^2}$. Hence $\|e^{-t{\bf{Q}}}\|\le e^{-t/\kappa}$ and
\begin{align}
\|{\bf X}'-{\bf X}\|
\le\|\bC\|\int_{T}^\infty \rd t \; e^{-t/\kappa}
 =\|\bC\|{\kappa}e^{-T/\kappa}:=\epsilon_0.
\end{align}
Hence, an $\epsilon_0$-approximate solution is obtained by choosing $t_R=t=\mathcal{O}(\kappa\log\frac{\kappa\|\bC\|}{\epsilon_0})$.
Let us now discretize the integral by a Riemann sum at $R$ points $t_r=r\delta_t$. This introduces another error
\begin{align} 
\|{\bf X}''-{\bf X}'\|&=\left\|\int_0^{T} \rd t \; e^{-t{\bf A}}\bC e^{-t{\bf B}} -\delta_t\sum_{r=0}^{R-1}e^{-t_r{\bf A} }\bC e^{-t_r{\bf B}}\right\|
\\ 
&\le\sum_{r=0}^{R-1}\int_0^{\delta_t} \rd t \; \left\|e^{-{(t_r+t) \bf A}}\bC e^{-{(t_r+t) \bf B}}-e^{-t_r {\bf A}}\bC e^{-t_r{\bf B} }\right\| 
\\ 
&\le\|{\bf C}\|\sum_{r=0}^{R-1}\int_0^{\delta_t} \rd t \; \|e^{-{\bf A}(t_r+t)}\| \|e^{-{\bf B}(t_r+t)}- e^{-{\bf B}t_r}\|+
 \|e^{-{\bf A}(t_r+t)}-e^{-{\bf A}t_r}\|\| e^{-{\bf B}t_r}\|
 \\  
&=\mathcal{O}\left(\|{\bf C}\|\sum_{r=0}^{R-1}\|e^{-{\bf A}t_r}\|\|e^{-{\bf B}t_r}\|\int_0^{\delta_t}(\|{\bf B}\|+\|{\bf A}\|
)s\right)
 \mathrm{d}s
 \\ 
&=\mathcal{O}\left(\delta_t^2\|{\bf C}\|\sum_{r=0}^{R-1}\|e^{-{\bf A}t_r}\|\|e^{-{\bf B}t_r}\|(\|{\bf B}\|+\|{\bf A}\|
)\right) 
\\
&=\mathcal{O}\left(\delta_t^2\|{\bf C}\|(\|{\bf B}\|+\|{\bf A}\|)\sum_{r=0}^{R-1}e^{-\frac{1}{\kappa}t_r}\right)
\\ 
&
=\mathcal{O}\left(\delta_t\kappa\|{\bf C}\|(\|{\bf B}\|+\|{\bf A}\|)\right)
\\
&=\mathcal{O}\left(\delta_t\kappa\|{\bf C}\|\right)
=\mathcal{O}(\epsilon_1).
\end{align}
Hence, discretization error in $t$ is bounded by $\epsilon_1$ with the choice $\delta_t=\mathcal{O}(\frac{\epsilon_1}{\kappa\|{\bf C}\|})$

Let $\hat{f}(\omega)$ be a kernel function that realizes exponential decay
\begin{align}
\forall t\ge0, \quad e^{-t}=\int_\mathbb{R}\hat{f}(\omega)e^{-\ri\omega t}\mathrm{d}\omega.
\end{align}
By the Linear-Combination-of-Hamiltonian-Simulation (LCHS) technique~\cite{an2023linear,an2023quantum}, there exist choices of $\hat{f}(\omega)$ that are bounded like $|\hat{f}(\omega)|=\mathcal{O}(1)$ and smooth like $|\hat{f}'(\omega)|=\mathcal{O}(1)$ such that for any matrix ${\bf Q}$ satisfying ${\bf Q}_H\succeq 0$, and any $t\ge 0$,
\begin{align}\label{eq:lchs}
e^{-{\bf Q}t}=e^{-({\bf Q}_H+\ri{\bf Q}_S)t}=\int_\mathbb{R}\hat{f}(\omega)e^{-\ri({\bf Q}_H\omega+{\bf Q}_S)t}\mathrm{d}\omega.
\end{align}
By `unvectorizing' the expression, we obtain the equalities
\begin{align}
\mathrm{unvec}[e^{-{\bf Q}t}\dket{C}]&=\int_{\mathbb{R}} \rd \omega \; \hat{f}(\omega)e^{-\ri({\bf A}_H\omega+{\bf A}_S)t}{\bf C}e^{-\ri({\bf B}_H\omega+{\bf B}_S)t} ,
\\
{\bf X}=\mathrm{unvec}\left[\int_0^\infty e^{-{\bf Q}s}\mathrm{d}s\dket{C}\right]&=\int_{0}^\infty\int_{\mathbb{R}}\hat{f}(\omega)e^{-\ri ({\bf A}_H\omega+{\bf A}_S)s}{\bf C}e^{-\ri ({\bf B}_H\omega+{\bf B}_S)s}\mathrm{d}\omega\mathrm{d}s,
\\
{\bf X}''&=\delta_t\sum_{r=0}^{R-1}\int_{\mathbb{R}}\hat{f}(\omega)e^{-\ri ({\bf A}_H\omega+{\bf A}_S)t_r}{\bf C}e^{-\ri ({\bf B}_H\omega+{\bf B}_S)t_r}\mathrm{d}\omega
\end{align}
where ${\rm unvec}$ refers to the inverse of the vectorization.
According to Lemma 9 of~\cite{an2023quantum}, there exists a choice of $\hat{f}_\beta(\omega)$ parameterized by any constant $\beta\in(0, 1)$, that satisfies~\cref{eq:lchs} and decays such that the tails of the integral truncated to outside the interval $[-\omega_J,\omega_J]$ where $\omega_J=\mathcal{O}(\log^{1/\beta}\frac{1}{\epsilon_2'})$, satisfies $\int_{\mathbb{R}\backslash[-\omega_J,\omega_J]}|\hat{f}_\beta(\omega)|\mathrm{d}\omega\le\epsilon_2'$ for any $\epsilon_2'>0$.
Let ${\bf X}'''$ be the truncation of the inner integral. Then
\begin{align}\nonumber
\|{\bf X}'''-{\bf X}''\|&=\delta_t\sum_{r=0}^{R-1}\left\|\int_{\mathbb{R}\backslash[-\omega_J,\omega_J]}\hat{f}(\omega)e^{-\ri ({\bf A}_H\omega+{\bf A}_S)t_r}{\bf C}e^{-\ri ({\bf B}_H\omega+{\bf B}_S)t_r}\mathrm{d}\omega\right\|
\\\nonumber
&\le\delta_t\sum_{r=0}^{R-1}\int_{\mathbb{R}\backslash[-\omega_J,\omega_J]}|\hat{f}(\omega)|\left\|e^{-\ri ({\bf A}_H\omega+{\bf A}_S)t_r}{\bf C}e^{-\ri ({\bf B}_H\omega+{\bf B}_S)t_r}\right\|\mathrm{d}\omega
\\
&\le\delta_t\|{\bf C}\|\sum_{r=0}^{R-1}\int_{\mathbb{R}\backslash[-\omega_J,\omega_J]}|\hat{f}(\omega)|\mathrm{d}\omega\le t_R\|{\bf C}\|\epsilon_2'=\mathcal{O}(\epsilon_2).
\end{align}
Hence, it suffices to choose $\omega_J=\mathcal{O}(\log^{1/\beta}\frac{t_R\|{\bf C}\|}{\epsilon_2})$.

Now, let ${\bf X}''''$ be the discretization of the inner integral to finite number of $2J$ points $\omega_j=\delta_\omega j$ for $j=-J,...,J-1$, with step size $\delta_\omega=\mathcal{O}(\omega_J/J)$.
Then by a Taylor expansion,
\begin{align}\nonumber
&\|{\bf X}''''-{\bf X}'''\|\\\nonumber
&=\delta_t\sum_{r=0}^{R-1}\left\|\sum_{j=-J}^{J-1}\int_{0}^{\delta_\omega}\hat{f}(\omega_j+s)e^{-\ri ({\bf A}_H(\omega_j+s)+{\bf A}_S)t_r}{\bf C}e^{-\ri ({\bf B}_H(\omega_j+s)+{\bf B}_S)t_r}-\hat{f}(\omega_j)e^{-\ri ({\bf A}_H\omega_j+{\bf A}_S)t_r}{\bf C}e^{-\ri ({\bf B}_H\omega_j+{\bf B}_S)t_r}\mathrm{d}s\right\|
\\\nonumber
&\le\delta_t\sum_{r=0}^{R-1}\sum_{j=-J}^{J-1}\int_{0}^{\delta_\omega}\left\|\hat{f}(\omega_j+s)e^{-\ri ({\bf A}_H(\omega_j+s)+{\bf A}_S)t_r}{\bf C}e^{-\ri ({\bf B}_H(\omega_j+s)+{\bf B}_S)t_r}-\hat{f}(\omega_j)e^{-\ri ({\bf A}_H\omega_j+{\bf A}_S)t_r}{\bf C}e^{-\ri ({\bf B}_H\omega_j+{\bf B}_S)t_r}\right\|\mathrm{d}s
\\\nonumber
&=\mathcal{O}\left(\delta_t\|{\bf C}\|\sum_{r=0}^{R-1}\sum_{j=-J}^{J-1}\int_{0}^{\delta_\omega}s(|\hat{f}'(\omega_j)|+|\hat{f}(\omega_j)|(\|{\bf A}_H\|+\|{\bf B}_H\|)t_r)\mathrm{d}s\right)
\\\nonumber
&=\mathcal{O}\left(\delta_t\|{\bf C}\|\sum_{r=0}^{R-1}\sum_{j=-J}^{J-1}\int_{0}^{\delta_\omega}s(1+(\|{\bf A}_H\|+\|{\bf B}_H\|)t_r)\mathrm{d}s\right)
\\\nonumber
&=\mathcal{O}\left(\omega_J\|{\bf C}\|\delta_\omega(\|{\bf A}_H\|+\|{\bf B}_H\|)t_R^2\right)=\mathcal{O}(\epsilon_3)
\\
&=\mathcal{O}\left(\omega_J\|{\bf C}\|\delta_\omega t_R^2\right)=\mathcal{O}(\epsilon_3),
\end{align}
it suffices to choose $\delta_\omega=\epsilon_3/(\omega_J\|{\bf C}\|t_R^2)$. Note that we may extend the range of the sum from $J-1$ to $J$ without changing the asymptotic error scaling.
We complete our proof by a triangle inequality to show that $\|{\bf X}''''-{\bf X}\|=\mathcal{O}(\epsilon_0+\epsilon_1+\epsilon_2+\epsilon_3)$ and choosing all $\epsilon_0=\epsilon_1=\epsilon_2=\epsilon_3=\epsilon$.
\end{proof}
We note that it is possible to improve the scaling of $\delta_t$ and $\delta_\omega$ with $\epsilon$ to $\log(\frac{1}{\epsilon})$, by a better analysis or by optimal non-uniform Gaussian quadrature with non-uniform weights~\cite{an2023quantum}, but this does not affect the overall query complexity, and only improves gate complexity by a logarithmic factor.

The approximation $\tilde \bX$ is already an LCU, i.e., a linear combination of Hamiltonian evolutions under $\bA$ and $\bB$.
Next we consider the L1-norm of this LCU.
\begin{lemma}
    The $L_1$-norm of $\tilde \bX$ in Lemma~\ref{lem:LCUpositiveHermitian} satisfies
    \begin{align}
        y:= \delta_t\delta_\omega\sum_{r=0}^{R-1}\sum_{j=-J}^J|\hat{f}(\omega_j)|=\mathcal{O}\left(\kappa\log\frac{\kappa\|{\bf C}\|}{\epsilon}\right)
    \end{align}
\end{lemma}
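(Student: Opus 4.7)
The plan is to factor the double sum into a product and bound each factor separately. Because $|\hat f(\omega_j)|$ depends only on $j$ and the summand is otherwise independent of $r$, we have
\begin{align}
y = \left(\delta_t \sum_{r=0}^{R-1} 1\right) \left(\delta_\omega \sum_{j=-J}^{J} |\hat f(\omega_j)|\right)
= (\delta_t R) \cdot \left(\delta_\omega \sum_{j=-J}^{J} |\hat f(\omega_j)|\right) \, .
\end{align}
For the time factor, $R = \lceil t_R/\delta_t\rceil$ gives $\delta_t R \le t_R + \delta_t = \mathcal{O}(t_R) = \mathcal{O}\!\left(\kappa \log\frac{\kappa\|{\bf C}\|}{\epsilon}\right)$ by the choice of $t_R$ in Lemma~\ref{lem:LCUpositiveHermitian}.

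For the frequency factor, the strategy is to recognize the sum as a Riemann approximation to the integral $\int_{-\omega_J}^{\omega_J} |\hat f(\omega)|\, \rd\omega$, and then use the properties of the LCHS kernel $\hat f = \hat f_\beta$ to show that this integral is $\mathcal{O}(1)$. Specifically, since the LCHS kernel satisfies $|\hat f'(\omega)| = \mathcal{O}(1)$, the Riemann-sum error per subinterval is $\mathcal{O}(\delta_\omega^2)$, giving a total error of $\mathcal{O}(J\delta_\omega^2) = \mathcal{O}(\omega_J \delta_\omega)$, which is negligible for the chosen parameters. The truncated integral satisfies
\begin{align}
\int_{-\omega_J}^{\omega_J} |\hat f(\omega)|\, \rd\omega \le \int_{-\infty}^{\infty} |\hat f(\omega)|\, \rd\omega = \mathcal{O}(1) \, ,
\end{align}
where the $L_1$-integrability of $\hat f_\beta$ follows from the tail-bound $\int_{\mathbb{R}\setminus[-\omega_J,\omega_J]} |\hat f(\omega)|\,\rd\omega \le \epsilon_2'$ used in the proof of Lemma~\ref{lem:LCUpositiveHermitian} together with $|\hat f(\omega)| = \mathcal{O}(1)$ on any bounded interval. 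Hence $\delta_\omega \sum_{j=-J}^{J} |\hat f(\omega_j)| = \mathcal{O}(1)$.

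Combining the two factors gives $y = \mathcal{O}(t_R) = \mathcal{O}\!\left(\kappa \log\frac{\kappa\|{\bf C}\|}{\epsilon}\right)$, as claimed. The main (mild) obstacle is justifying that the $L_1$-integral of $\hat f_\beta$ is in fact $\mathcal{O}(1)$; this is not explicitly stated in the excerpt but is a standard property of the LCHS kernels of Ref.~\cite{an2023quantum} (indeed $\int \hat f_\beta(\omega)\,\rd\omega = e^{0} = 1$ and the $L_1$ norm differs from this by at most a constant due to the prescribed decay rate). Everything else is a routine Riemann-sum estimate using the parameter choices already fixed in Lemma~\ref{lem:LCUpositiveHermitian}.
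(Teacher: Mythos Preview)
Your proposal is correct and follows essentially the same route as the paper: factor the double sum into $(\delta_t R)\cdot\bigl(\delta_\omega\sum_j |\hat f(\omega_j)|\bigr)$, use $\delta_t R=\mathcal{O}(t_R)=\mathcal{O}(\kappa\log\frac{\kappa\|\bC\|}{\epsilon})$, and show the frequency sum is $\mathcal{O}(1)$ by comparing it to the integral $\int_{\mathbb R}|\hat f(\omega)|\,\rd\omega=\mathcal{O}(1)$ via a Riemann-sum estimate with error $\mathcal{O}(J\delta_\omega^2)=\mathcal{O}(\omega_J\delta_\omega)$. The only cosmetic difference is that the paper cites the explicit form of $\hat f_\beta$ and its superpolynomial decay from Ref.~\cite{an2023quantum} to assert $L_1$-integrability, whereas you infer it from the tail bound plus boundedness; both are valid.
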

\begin{proof}
The sum over $r$ is easy to evaluate: 
\begin{align}
y:= \delta_tR\delta_\omega\sum_{j=-J}^J|\hat{f}(\omega_j)|=t_R\delta_\omega\sum_{j=-J}^J|\hat{f}(\omega_j)|=\mathcal{O}\left(\kappa\log\frac{\kappa\|{\bf C}\|}{\epsilon}\delta_\omega\sum_{j=-J}^J|\hat{f}(\omega_j)|\right).
\end{align}
Hence, it suffices to show that $\delta_\omega\sum_{j=-J}^J|\hat{f}(\omega_j)|=\mathcal{O}(1)$.
The specific choice of $|\hat{f}(\omega)|$ in the proof of~\cref{lem:LCUpositiveHermitian} is
\begin{align}
\hat{f}(\omega)=\frac{1}{2\pi}\frac{1}{1-i\omega}\frac{1}{e^{(1+i\omega)^\beta}},
\quad|\hat{f}(\omega)|=\frac{e^{-\left(\omega^2+1\right)^{\beta /2} \cos (\beta  \tan^{-1}\omega)}}{2 \pi  \sqrt{\omega^2+1}}.
\end{align}
According to~\cite{an2023quantum}, for any constant $\beta\in(0,1)$, the integral
\begin{align}
\int_\mathbb{R}|\hat{f}(\omega)|\mathrm{d}\omega=\mathcal{O}(1).
\end{align}
This can be understood as arising from the superpolynomial decay of the numerator.
Moreover, as $\hat{f}(\omega)$ is a smooth function, $|\hat{f}'(\omega)|$ is bonded by a constant. Hence the error between the integral and its discrete approximation is
\begin{align}\nonumber
&\int_{-\omega_J}^{\omega_J}|\hat{f}(\omega)|\mathrm{d}\omega-\delta_\omega\sum_{j=J}^{J-1}|\hat{f}(\omega_j)|
=\sum_{j=-J}^{J-1}\int_0^{\delta_\omega}|\hat{f}(\omega_j+s)|-|\hat{f}(\omega_j)|\mathrm{d}s
=\mathcal{O}\left(\sum_{j=-J}^{J-1}\int_0^{\delta_\omega}|s\hat{f}'(\omega_j)|\mathrm{d}s\right)
\\
&=\mathcal{O}\left(J\delta_\omega^2\right)
=\mathcal{O}\left(\omega_J\delta_\omega\right)
=\mathcal{O}\left(\epsilon /(\|{\bf C}\|t_R^2)\right)=\mathcal{O}(\epsilon),
\end{align}
and is arbitrarily small, where we substitute the choice of $\delta_\omega$ in~\cref{lem:LCUpositiveHermitian}. We complete the proof by a triangle inequality.
\end{proof}

We are ready to obtain the query and gate complexities for this case.  The query complexity is then determined by the largest evolution time in the LCU. This is linear in $t_R \omega_J =\cO(\kappa \log^{1+1/\beta}(\frac{\|{\bf C}\|\kappa}{\epsilon}))$. We can use a method like QSP to simulate this evolution, and it suffices to do so within additive error $\cO(\epsilon)$ due to Lemma~\ref{lem:evolutionerror}. This gives the query complexity
\begin{align}
    Q= \cO\left(\kappa \log^{1+1/\beta}\frac{\|{\bf C}\|\kappa}{\epsilon}+ \log(1/\epsilon)\right)=
    \cO\left(\kappa \log^{1+1/\beta}\frac{\|{\bf C}\|\kappa}{\epsilon}\right) \;.
\end{align}

The additional gate complexity is to implement $V$,
which is the unitary acting on an ancillary register
that sets the weights in the LCU. Note that we can express
\begin{align}
    U_{\bX/x}=\frac 1 x \sum_{i} x_i V_i U_{\bC/\alpha} W_i 
\end{align}
where the $V_i$'s and $W_i$'s are the time evolutions. We can assume $x_i \ge 0$ by absorbing the phase in the definition of the unitaries.
Then,
\begin{align}
    V \ket 0 \mapsto \frac 1 {\sqrt x} \sum_i \sqrt {x_i}\ket i \;.
\end{align}
 For this case the register $\ket i$ contains two registers, one to encode the times $t_r$, another to encode the frequencies $\omega_j$. The time register can be prepared in a uniform superposition with $\log_2 R$ gates by simply applying Hadamard gates. The frequency registers are in states with amplitudes corresponding to the kernel function $\hat{f}(\omega)$. As $\hat{f}(\omega)$ is analytic for real $\omega$, its function values can be efficiently computed. In particular, the superposition state of amplitudes can be prepared~\cite{pocrnic2025constantfactorimprovementsquantumalgorithms} to within error $\epsilon$ using $\cO(\polylog(J/\epsilon))$ gates.
Hence, the gate complexity is
\begin{align}
    G=\cO( Q+\log R+\polylog(J/\epsilon))=\cO(Q+\polylog(\kappa\|{\bf C}\|/\epsilon))
    =
    \cO(\kappa \; \polylog(\kappa\|{\bf C}\|/\epsilon)\;.
\end{align}
The term $Q$ is due to the gates in QSP.

\subsection{The case $\bB=0$}

This case is similar to Ref.~\cite{CKS17}. Note that we aim at solving 
\begin{align}
    \bA \bX = \bC \;,
\end{align}
and we can assume $\bA$ to be Hermitian without loss of generality since replacing 
\begin{align}
    \bA \mapsto \begin{pmatrix}
        {\bf 0} & \bA \cr \bA^\dagger & {\bf 0}  
    \end{pmatrix} \; , \; \bX \mapsto \begin{pmatrix}
        {\bf 0} & {\bf 0} \cr \bX & {\bf 0}  
    \end{pmatrix} \; , \; \bC \mapsto \begin{pmatrix}
        {\bC} & {\bf 0} \cr {\bf 0} & {\bf 0}  
    \end{pmatrix}
\end{align}
reduces the general case to the Hermitian $\bA$ case. In a sense, this matrix equation is solving a linear system $\bA \vec X_i=\vec C_i$ where $\bA$ is fixed, but the different instances correspond to the different columns of $\bC$.

It is convenient to consider an approximation to $1/\bA$ and applying it to $\bC$ rather than considering vectorization in this case, since it gives better error bounds without invoking Lemma~\ref{lem:frobeniuserror}.
We assume $\|\bA\|\le 1$, which implies $\|\bQ\|\le 1$, and the parameter $\kappa=\|\bQ^{-1}\|$
is related to the condition number. The following result is adapted and improved from Ref.~\cite{CKS17}.
\begin{lemma}
\label{lem:LCUB=0}
    Let $\bA = \bA^\dagger \in \mathbb C^{N \times N}$
    be invertible such that $\|\bA\|\le 1$ and  $\kappa=\|\bA^{-1}\|$, and $\varepsilon >0$ the error. Then, there exists positive constants $c_1$, $c_2$, $c_3$, and $c_4$ such that for $\delta_t = c_1 \varepsilon/\sqrt{\log(1/\varepsilon)} $, $\delta_\omega = c_2/(\kappa \sqrt{\log(1/\varepsilon)})$, $t_R = c_3 \kappa \sqrt{\log(1/\varepsilon)}$, and $\omega_J = c_4\sqrt{\log(1/\varepsilon)}$, the Hermitian matrix
    \begin{align}
      h(\bA):= \frac {\ri} {\sqrt{2\pi}}\delta_t \delta_\omega \sum_{r=0}^{R-1}   \sum_{j=-J}^J    \omega_j   e^{-\omega_j^2/2} e^{-\ri t_r \omega_j \bA} \;,
    \end{align}
    where $t_r:=r \delta_t$, $\omega_j:=j \delta_\omega$, $R:=\lceil t_R/\delta_t\rceil$, and $J:=\lceil \omega_J/\delta_\omega\rceil$,
    can be written as $h({\bf A})=(\one_N-\varepsilon({\bf A})) \frac 1 {\bf A}$ and
    \begin{align}
        \| \varepsilon({\bf A}) \| \le \varepsilon \;.
    \end{align}
\end{lemma}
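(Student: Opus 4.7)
The plan is to follow the strategy used for normal matrices (Lemma \ref{lem:LCUnormal}), specialized to the degenerate case where the skew-Hermitian part vanishes. Since $\bA$ is Hermitian, the starting identity is
\begin{align}
\frac{1}{\bA} = \frac{\ri}{\sqrt{2\pi}} \int_0^\infty \rd t \int_{-\infty}^\infty \rd \omega \; \omega \, e^{-\omega^2/2} e^{-\ri \omega t \bA},
\end{align}
and $h(\bA)$ is precisely the Riemann-sum truncation of this double integral. To quantify the error I define $f(\bA) := h(\bA)\bA$ and show $\|\one_N - f(\bA)\| \le \varepsilon$; this identifies $\varepsilon(\bA) = \one_N - f(\bA)$ as the desired error operator in the multiplicative form required.

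The next step is to carry out the sum over $r$ exactly as a geometric series, producing
\begin{align}
f(\bA) = \frac{\ri}{\sqrt{2\pi}} \delta_t \delta_\omega \sum_{j=-J}^J \omega_j \bA \, e^{-\omega_j^2/2} \frac{\one_N - e^{-\ri t_R \omega_j \bA}}{\one_N - e^{-\ri \delta_t \omega_j \bA}},
\end{align}
with the appropriate limit taken on zero eigenvalues. Because the parameters satisfy $\delta_t \omega_J = \cO(\varepsilon)$ and $\|\bA\| \le 1$, a Taylor expansion of the denominator gives $(\ri \delta_t \omega_j \bA)^{-1}(\one_N + \cO(\varepsilon))$. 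Up to a multiplicative $\cO(\varepsilon)$ error this reduces $f(\bA)$ to
\begin{align}
\frac{\delta_\omega}{\sqrt{2\pi}} \sum_{j=-J}^J e^{-\omega_j^2/2} \bigl(\one_N - e^{-\ri t_R \omega_j \bA}\bigr),
\end{align}
so it only remains to show that this sum is close to $\one_N$. For the first term, Poisson summation gives $\frac{\delta_\omega}{\sqrt{2\pi}} \sum_{j=-\infty}^\infty e^{-\omega_j^2/2} = \sum_k e^{-(2\pi k/\delta_\omega)^2/2}$, whose deviation from $1$ is exponentially small in $1/\delta_\omega^2$ and therefore $\cO(\varepsilon)$ with $\delta_\omega \sim 1/\sqrt{\log(1/\varepsilon)}$; the truncation to $|j| \le J$ costs only $\cO(e^{-\omega_J^2/2}) = \cO(\varepsilon)$. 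For the second term, Poisson summation produces $\sum_k e^{-(t_R \bA + 2\pi k/\delta_\omega)^2/2}$: the dominant $k=0$ term has spectral norm at most $e^{-(t_R/\kappa)^2/2} = \cO(\varepsilon)$ by the choice of $t_R$, and the $k \ne 0$ contributions are exponentially suppressed provided $2\pi/\delta_\omega \gtrsim t_R$, which is guaranteed by our parameter choices. Assembling these estimates with the triangle inequality and absorbing numerical constants into $c_1, c_2, c_3, c_4$ yields $\|\varepsilon(\bA)\| \le \varepsilon$.

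The main obstacle I expect is the uniform control of the Taylor-expansion error in the geometric-series denominator: the factor $1/(\ri \delta_t \omega_j \bA)$ is singular on the kernel of $\omega_j \bA$, so the $\cO(\varepsilon)$ remainder has to be bounded uniformly in $j$ and across the spectrum of $\bA$. This is exactly controlled by the condition $\delta_t \omega_J = \cO(\varepsilon)$ and mirrors the corresponding step in Lemma \ref{lem:LCUnormal}. A convenient simplification relative to the normal-matrices case is that, because $\bA$ is already Hermitian, there is no analogue of the $f^\rI(\bQ)$ piece of that earlier proof to dispose of, so the argument collapses to a single ``real'' calculation. After that the proof is essentially bookkeeping: the constants $c_1,c_2,c_3,c_4$ are chosen so that the truncation, aliasing, and Taylor-remainder errors are simultaneously $\cO(\varepsilon)$.
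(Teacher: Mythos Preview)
Your proposal is correct and follows essentially the same approach as the paper's proof: sum the geometric series in $r$, Taylor-expand the denominator using $\delta_t\omega_J=\cO(\varepsilon)$, and then control the two resulting Gaussian sums via Poisson summation together with the conditions $e^{-(t_R/\kappa)^2/2}=\cO(\varepsilon)$ and $2\pi/\delta_\omega \gtrsim t_R$. Your observation that the absence of a skew-Hermitian part eliminates the $f^\rI$ term from the normal-matrices argument is exactly the simplification the paper exploits.
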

\begin{proof}
    The proof follows similar steps to Lemma 11 of Ref.~\cite{CKS17}. Performing the sum over $r$ gives
    \begin{align}
        \bA h(\bA) = \frac {\ri} {\sqrt{2\pi}}\delta_t \delta_\omega     \sum_{j=-J}^J    \bA \omega_j   e^{-\omega_j^2/2} \frac{\one_N - e^{-\ri t_R \omega_j \bA}}{\one_N - e^{-\ri \delta_t \omega_j \bA}} \;.
    \end{align}
    With the proper choice of constants, we have $\|\delta_t \omega_j \bA\|\le \delta_t \omega_J = \cO(\varepsilon)$. A Taylor series expansion implies
    \begin{align}
     \frac 1  {\one_N - e^{-\ri \delta_t \omega_j \bA}}  = \frac 1 {\ri \delta_t \omega_j \bA} (\one_N + \ri \delta_t \omega_j \bA + \ldots )\;.
    \end{align}
    It follows that 
    \begin{align}
      \frac {\ri} {\sqrt{2\pi}}\delta_t \delta_\omega     \sum_{j=-J}^J    \bA \omega_j   e^{-\omega_j^2/2} \frac{1}{\one_N - e^{-\ri \delta_t \omega_j \bA}}& = 
       \frac {\ri} {\sqrt{2\pi}}  \delta_\omega     \sum_{j=-J}^J   e^{-\omega_j^2/2} + \cO(\varepsilon) \\ &= \one_N +\cO(e^{-(2\pi /\delta_\omega)^2/2})+
       \cO(e^{-\omega_J^2/2}) + \cO(\varepsilon) \\
       &=\one_{N}+\cO(\varepsilon)\;.  
    \end{align}
    The other relevant term is
    \begin{align}
       \frac {1} {\sqrt{2\pi}} \delta_\omega     \sum_{j=-J}^J        e^{-\omega_j^2/2} 
       e^{-\ri t_R \omega_j \bA} &=  \frac {1} {\sqrt{2\pi}} \delta_\omega     \sum_{j=-\infty}^\infty        e^{-\omega_j^2/2} 
       e^{-\ri t_R \omega_j \bA}+ \cO(e^{-\omega_J^2/2}) \\
       & = \sum_{k=-\infty}^\infty  e^{-(t_R  \bA + 2 \pi k /\delta_\omega)^2/2}+ \cO(e^{-\omega_J^2/2}) \\
       & =e^{-(t_R  \bA )^2/2}+\sum_{k\ne0}  e^{-(t_R  \bA + 2 \pi k /\delta_\omega)^2/2}+ \cO(e^{-\omega_J^2/2}) \;,
    \end{align}
    where we used Eq.~\eqref{eq:Diraccomb} and the Fourier transform of the Gaussian.
    Here we need the assumption $1/\delta_\omega = \Omega(t_R)$,  which is readily implied by the hypothesis with a proper choice of constants; for example we can choose $2 \pi /\delta_\omega \ge 2 t_R$.
    Note that $\cO(e^{-\omega_J^2/2})=\cO(\varepsilon)$ and
    $\|e^{-(t_R  \bA )^2/2}\| \le e^{-(t_R  /\kappa )^2/2}=\cO(\varepsilon)$. Also, $e^{-( \pi /\delta_\omega)^2/2}=\cO(\varepsilon)$ implying that the second term (the one with the sum over $k \ne 0$) is also within the same error order.

    In summary, we showed
    \begin{align}
        \bA h(\bA) = \one_N - \varepsilon(\bA) = \one_N + \cO(\varepsilon)
    \end{align}
    implying $\|\varepsilon(\bA)\|\le \varepsilon$ with the right choice of constants.
\end{proof}

The approximation $h(\bA)$ is already an LCU, i.e., a linear combination of Hamiltonian evolutions under $\bA$.
Next we consider the L1-norm of this LCU.
\begin{lemma}
    The $L_1$-norm of $h(\bA)$ in Lemma~\ref{lem:LCUB=0} satisfies
    \begin{align}
        y:=\frac 1{\sqrt{2\pi}}\delta_t \delta_\omega \sum_{r=0}^{R-1} \sum_{j=-J}^J |\omega_j| e^{-\omega_j^2/2} = \cO(\kappa \sqrt{\log(1/\varepsilon)}) \;.
    \end{align}
\end{lemma}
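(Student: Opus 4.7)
The plan is to carry out the two sums separately: the $r$-sum is trivial, and the $j$-sum is bounded by recognizing it as a Riemann-sum approximation to a convergent integral.

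First I would execute the sum over $r$, which simply contributes a factor $R$. Since $R=\lceil t_R/\delta_t\rceil$, we have $\delta_t R \le t_R + \delta_t = \cO(t_R) = \cO(\kappa \sqrt{\log(1/\varepsilon)})$, where the last equality uses the choice of $t_R$ stated in Lemma~\ref{lem:LCUB=0}. This reduces the problem to showing
\begin{align}
S := \delta_\omega \sum_{j=-J}^J |\omega_j| e^{-\omega_j^2/2} = \cO(1).
\end{align}

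Next I would bound $S$ by comparison with the continuous integral $I := \int_{-\infty}^\infty |\omega| e^{-\omega^2/2} \rd \omega = 2$, which is finite and independent of all parameters. Writing $S$ as the difference between a Riemann sum on $[-\omega_J,\omega_J]$ and $I$, plus the tails, I would control two sources of error. The tail contribution $\delta_\omega\sum_{|j|>J}|\omega_j|e^{-\omega_j^2/2}$ is $\cO(e^{-\omega_J^2/2}) = \cO(\varepsilon)$ by the choice $\omega_J = c_4\sqrt{\log(1/\varepsilon)}$, using the same Gaussian tail estimate already invoked in the proof of Lemma~\ref{lem:LCUB=0}. The discretization error on the bulk is controlled by a standard midpoint/left-endpoint Riemann-sum bound: since $|\omega|e^{-\omega^2/2}$ has uniformly bounded derivative on $\mathbb{R}$, the error is $\cO(\omega_J \delta_\omega) = \cO(1/\kappa) = \cO(1)$. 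Combining these, $S \le I + \cO(1) = \cO(1)$.

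Finally, multiplying the bounds and dividing by $\sqrt{2\pi}$ yields $y = \cO(1)\cdot \cO(t_R) = \cO(\kappa \sqrt{\log(1/\varepsilon)})$ as claimed.

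The only mildly delicate step is the Riemann-sum comparison because $|\omega|$ is not smooth at the origin, but this introduces at most an $\cO(\delta_\omega)$ error near $\omega=0$, which is negligible. No step here is an obstacle; the bound essentially reduces to the finiteness of $\int |\omega|e^{-\omega^2/2}\rd\omega$ together with the parameter choices already fixed in Lemma~\ref{lem:LCUB=0}.
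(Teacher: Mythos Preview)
Your proposal is correct and follows essentially the same approach as the paper: both evaluate the $r$-sum to obtain a factor $\cO(t_R)$ and then argue that the remaining $j$-sum is $\cO(1)$ because it is a Riemann-sum approximation to the convergent integral $\int_{-\infty}^{\infty}|\omega|e^{-\omega^2/2}\,\rd\omega$. The only cosmetic difference is that the paper first upper-bounds the finite $j$-sum by the infinite one (valid since all terms are nonnegative) before comparing to the integral, whereas you control the tails and discretization error directly; both routes yield the same bound with the same level of effort.
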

\begin{proof}
    Note that 
    \begin{align}
    y= t_R \frac 1 {\sqrt{2\pi}}\delta_\omega\sum_{j=-J}^J |\omega_j| e^{-\omega_j^2/2} 
    \end{align}
    and we can upper bound this by an infinite sum. In addition, 
    \begin{align}
       \frac 1 {\sqrt{2\pi}}\delta_\omega\sum_{j=-\infty}^\infty |\omega_j| e^{-\omega_j^2/2}&  = 2 \frac 1 {\sqrt{2\pi}}\int_0^\infty \rd \omega \; \omega e^{-\omega^2/2} + \cO(\varepsilon) \\
       & = \frac 2 {\sqrt{2\pi}} + \cO(\varepsilon) \;.
    \end{align}
    Hence, $y \approx t_R \frac 2 {\sqrt{2\pi}}=\cO(\kappa\sqrt{\log(1/\varepsilon)})$.
\end{proof}

We are ready to obtain the query and gate complexities for this case. First note that, in contrast with other cases, we can set $\varepsilon=\cO(\epsilon)$. This is due to 
\begin{align}
  \|\bX'-\bX\| &=   \| (h(\bA) - \frac 1 {\bA})\bC\| \\
  & = \| \varepsilon(\bA) \bX \| \\
  & \le \varepsilon \|\bX\| \;.
\end{align}
Hence $\varepsilon$ is the multiplicative error and when we rescale by $x \ge \|\bX\|$ in the LCU, it becomes the additive error, so it suffices to set $\varepsilon=\cO(\epsilon)$. This gives $x:=\alpha y =\cO(\alpha \kappa \sqrt{\log(1/\epsilon)})$. The query complexity is then determined by the largest evolution time in the LCU. This is linear in $t_R \omega_J =\cO(\kappa \log(1/\varepsilon))=\cO(\kappa \log(1/\epsilon))$. We can use a method like QSP to simulate this evolution, and it suffices to do so within additive error $\cO(\epsilon)$ due to Lemma~\ref{lem:evolutionerror}. This gives the query complexity
\begin{align}
    Q= \cO(\kappa \log(1/\epsilon)+ \log(1/\epsilon))=
    \cO(\kappa \log(1/\epsilon)) \;.
\end{align}

The additional gate complexity is due to the unitary $V$ that prepares the ancillary register. Note that we can write
\begin{align}
    U_{\bX/x}=\frac 1 x \sum_i x_i V_i U_{\bC/\alpha}
\end{align}
where the $V_i's$ are time evolutions with $\bA$. 
We can assume $x_i \ge 0$ by absorbing the phase in the unitaries $V_i$. The unitary in the ancilla register is then
\begin{align}
    V \ket 0 \mapsto \frac 1 {\sqrt x} \sum_i \sqrt{x_i} \ket i \;.
\end{align}
  The registers $\ket i$ are composed of two registers, one to encode the times $t_r$ and the other to encode the frequencies $\omega_j$. The time register is in uniform superposition and can be prepares with $\log_2 R$ Hadamard gates. The amplitudes for the frequency register are according to the first Hermite function, and we can use the results in Ref.~\cite{Som15} to prepare this state with complexity $\cO(\log(J/\epsilon))$. Hence, the gate complexity is
\begin{align}
    G=\cO(Q+ \log(RJ/\epsilon)) =\cO(Q+\log(\kappa/\epsilon))=\cO(\kappa \log(1/\epsilon))\;.
\end{align}
The term $Q$ is due to the gates for QSP.

\subsection{Positive matrices}

This is a special case where we assume access to the `square roots' of $\bA$ and $\bB$. 
For systems of linear equations, this case is analyzed in Ref.~\cite{CS16}, where an identity based on Hubbard-Stratonovich is used. Note that $\bQ \succ 0$ and, as before, we assume $\|\bQ\|\le 1$ and $\kappa=\|\bQ^{-1}\|$ is related to the condition number. The following result improves upon the results of Ref.~\cite{CS16}.

\begin{lemma}[Discrete time approximation to the inverse for positive matrices]
\label{lem:LCUpositive}
Let $\bQ = (\bP_\bA)^2 \otimes \one_N + \one_N \otimes (\bP_\bB)^2$ be positive and invertible with $\|\bQ\|\le 1$, $\kappa =\|\bQ^{-1}\|>0$, and $\varepsilon>0$ the error. Then, there exists positive constants $c_1$, $c_2$, $c_3$, and $c_4$ such that, for $\delta_t = c_1 \varepsilon$, $\delta_\omega=c_2/\sqrt{\kappa \log(1/\varepsilon)}$, $t_R=c_3 \kappa \log(1/\varepsilon)$, and $\omega_J=c_4 \sqrt{\log(\kappa/\varepsilon) }$, the matrix 
    \begin{align}
    h(\bQ):=\frac 1 {2\pi} \delta_t (\delta_\omega)^2 \sum_{r=0}^{R-1}
    \sum_{j=-J}^J  \sum_{j'=-J}^J e^{-(\omega_j^2 + \omega_{j'}^2)/2} e^{-\ri \omega_j \sqrt{2t_r}\bP_\bA } \otimes e^{-\ri \omega_{j'} \sqrt{2t_r}\bP_\bB }
\end{align}
where $t_r:=r \delta_t$, $\omega_j:=j \delta_\omega$, $R :=\lceil t_R/\delta_t\rceil$, and $J :=\lceil \omega_J/\delta_\omega\rceil$, can be written as $h(\bQ)=(\one_{N^2}-\varepsilon(\bQ)) \frac 1 \bQ$, and
\begin{align}
    \|\varepsilon(\bQ) - \one_{N^2}\|\le \varepsilon \;.
\end{align}
\end{lemma}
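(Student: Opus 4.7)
The plan is to start from the integral representation $\bQ^{-1}=\int_0^\infty e^{-t\bQ}\,\rd t$, which is valid since $\bQ\succeq(1/\kappa)\one_{N^2}$, and approximate it in four successive steps whose errors I sum by the triangle inequality. Combined with $\|\bQ\|\le 1$, any bound of the form $\|h(\bQ)-\bQ^{-1}\|\le\varepsilon$ yields $\|h(\bQ)\bQ-\one_{N^2}\|\le\varepsilon$, giving the claim (reading the stated conclusion as $\|\varepsilon(\bQ)\|\le\varepsilon$, consistent with Lemmas~\ref{lem:LCUnormal} and \ref{lem:LCUB=0}).

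The first step is truncation of the $t$ integral at $t_R$. Because $\|e^{-t\bQ}\|\le e^{-t/\kappa}$, the discarded tail has norm at most $\kappa e^{-t_R/\kappa}$, which is $\cO(\varepsilon)$ for $t_R=c_3\kappa\log(1/\varepsilon)$ with $c_3$ sufficiently large (absorbing a $\log\kappa$ factor under the standard assumption $\log\kappa=\cO(\log(1/\varepsilon))$). The second step is the Hubbard--Stratonovich identity $e^{-t\bP^2}=\tfrac{1}{\sqrt{2\pi}}\int_{-\infty}^\infty e^{-\omega^2/2}\,e^{-\ri\omega\sqrt{2t}\bP}\,\rd\omega$ applied to both $\bP_\bA$ and $\bP_\bB$, which rewrites $e^{-t\bQ}=e^{-t(\bP_\bA)^2}\otimes e^{-t(\bP_\bB)^2}$ as the exact double Gaussian integral appearing in the statement; this step introduces no error.

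The third step is truncation of each Gaussian integral to $[-\omega_J,\omega_J]$. Per factor, the Gaussian tail is $\cO(e^{-\omega_J^2/2}/\omega_J)$, and since the time-evolution operators $e^{-\ri\omega\sqrt{2t}\bP}$ are unitary, the resulting operator error at each fixed $t$ is controlled by this $L_1$-weight. Integrating over $t\in[0,t_R]$ gives total truncation error $\cO(t_R e^{-\omega_J^2/2})$, which is $\cO(\varepsilon)$ for $\omega_J=c_4\sqrt{\log(\kappa/\varepsilon)}$ with $c_4$ large enough. For the $t$-discretization, the left Riemann sum error on $\int_0^{t_R}e^{-t\bQ}\,\rd t$ is bounded using $\|e^{-(t_r+s)\bQ}-e^{-t_r\bQ}\|\le s\,e^{-t_r/\kappa}$ and the geometric sum $\sum_r e^{-t_r/\kappa}=\cO(\kappa/\delta_t)$, giving $\cO(\delta_t\kappa)$; a midpoint rule tightens this to $\cO(\delta_t^2\kappa)$, so the stated $\delta_t=c_1\varepsilon$ suffices.

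The main obstacle is the $\omega$-discretization, which is the place where the $\sqrt{t}$ factor inside the Hubbard--Stratonovich exponent actively couples the $\omega$-step size to $\kappa$ (contrast with Lemma~\ref{lem:LCUnormal}, where the analogous factor is $\cO(1)$). I would handle it using the Dirac-comb identity \eqref{eq:Diraccomb} as in the earlier lemmas: at fixed $t_r$, Poisson summation gives
\begin{align}
\tfrac{\delta_\omega}{\sqrt{2\pi}}\sum_{j=-\infty}^\infty e^{-\omega_j^2/2}\,e^{-\ri\omega_j\sqrt{2t_r}\bP_\bA}
=\sum_{k\in\mathbb Z} e^{-(\sqrt{2t_r}\bP_\bA+2\pi k/\delta_\omega)^2/2} \;.
\end{align}
The $k=0$ term reproduces $e^{-t_r(\bP_\bA)^2}$ exactly, while the aliasing terms $k\ne 0$ have spectral norm at most $\cO(e^{-(2\pi/\delta_\omega-\sqrt{2t_R}\|\bP_\bA\|)^2/2})$ since all eigenvalues of $\bP_\bA$ are real. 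Using $\sqrt{2t_R}\|\bP_\bA\|=\cO(\sqrt{\kappa\log(1/\varepsilon)})$, requiring $2\pi/\delta_\omega$ to dominate this forces $\delta_\omega=c_2/\sqrt{\kappa\log(1/\varepsilon)}$, making the aliasing contribution $\cO(\varepsilon)$. Cutting the sum at $|j|\le J$ with $J=\lceil\omega_J/\delta_\omega\rceil$ adds only exponentially small Gaussian tails already bounded in Step 3, and the $\omega'$ discretization contributes identically. Summing all four error contributions, choosing $c_1,\dots,c_4$ sufficiently large, and absorbing constants yields $\|h(\bQ)-\bQ^{-1}\|\le\varepsilon$, completing the proof.
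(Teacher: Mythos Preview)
Your overall structure---truncate in $t$, apply Hubbard--Stratonovich, then handle the $\omega$-integral via Poisson summation---matches the paper's, and your treatment of the $\omega$-truncation and $\omega$-discretization is essentially the same. The gap is in the $t$-discretization. Your additive Riemann-sum bound on $\int_0^{t_R}e^{-t\bQ}\,\rd t$ is $\cO(\delta_t\kappa)$, which with $\delta_t=c_1\varepsilon$ is $\cO(\varepsilon\kappa)$, not $\cO(\varepsilon)$. The midpoint-rule remark does not rescue this: the sum in the lemma is a left Riemann sum (it runs over $t_r=r\delta_t$), and even a midpoint bound $\cO(\delta_t^2\kappa)=\cO(\varepsilon^2\kappa)$ would still require $\kappa=\cO(1/\varepsilon)$, which is strictly stronger than your stated assumption $\log\kappa=\cO(\log(1/\varepsilon))$. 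Since you then convert to the multiplicative form only at the very end via $\|\bQ\|\le1$, the $\kappa$ factor never disappears.

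The paper avoids this by working multiplicatively \emph{from the start} for the $t$-discretization. Set $h'(\bQ):=\delta_t\sum_{r=0}^{R-1}e^{-t_r\bQ}$ and compute $\bQ h'(\bQ)$ directly using the closed-form geometric sum:
\[
\bQ h'(\bQ)=\frac{\delta_t\bQ}{\one_{N^2}-e^{-\delta_t\bQ}}\bigl(\one_{N^2}-e^{-t_R\bQ}\bigr)
=\bigl(\one_{N^2}+\cO(\delta_t)\bigr)\bigl(\one_{N^2}+\cO(\varepsilon)\bigr)
=\one_{N^2}+\cO(\varepsilon),
\]
where the first factor uses the Taylor expansion $\delta_t\bQ/(\one-e^{-\delta_t\bQ})=\one+\cO(\delta_t\|\bQ\|)$ and $\|\bQ\|\le 1$. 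This is where the stated choice $\delta_t=c_1\varepsilon$ (with no $1/\kappa$) is actually justified. The paper then bounds $\|h(\bQ)-h'(\bQ)\|$ additively using your Poisson-summation argument, noting the per-term $\omega$-error can be pushed to $\cO((\varepsilon/\kappa)^2)$ so that summing $\delta_t R=t_R=\cO(\kappa\log(1/\varepsilon))$ of them still gives $\cO(\varepsilon)$; multiplying by $\|\bQ\|\le 1$ finishes. Your argument becomes correct if you replace the Riemann-sum estimate with this geometric-series computation of $\bQ h'(\bQ)$.
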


\begin{proof}
    Our definition of $h(\bQ)$ is intended to approximate  the identity in 
    Eq.~\eqref{eq:case4solution} (in vectorized form).
    We start from a different approximation:
    \begin{align}
    h'(\bQ)&:=\delta_t \sum_{r=0}^{R-1} e^{-t_r \bQ}  = \delta_t   \frac{\one_{N^2}-e^{-t_{R} \bQ}}{\one_{N^2}-e^{-\delta_t \bQ}}\;.
\end{align}
Since $\|\delta_t \bQ\| \le \delta_t = \cO(\varepsilon)$ under the assumptions, we can perform a Taylor series expansion and write
\begin{align}
   \delta_t \bQ   \frac{\one_{N^2}}{\one_{N^2}-e^{-\delta_t \bQ}} &=  \bQ  \delta_t \frac 1 { \bQ  \delta_t} ( \one_{N^2} + \delta_t \bQ/2 + \ldots) \\
   &=\one_{N^2} + \cO(\varepsilon) \;.
\end{align}
In addition, under the assumptions we have $\|e^{-t_{R}\bQ}\| \le e^{-t_{R}/\kappa}=\cO(\varepsilon)$, implying
\begin{align}
    h'(\bQ) =(\one_{N^2}-\varepsilon'(\bQ)) \frac 1 {\bQ}
\end{align}
with $\|\varepsilon'(\bQ)\| =\cO(\varepsilon)$.

Next we bound the error between $h'(\bQ)$ with $h(\bQ)$. 
Note that $h(\bQ)$ is obtained by approximating $ e^{-t_r \bQ}=  e^{-t_r (\bP_\bA)^2} \otimes  e^{-t_r (\bP_\bB)^2}$
using
\begin{align}
  e^{-t_r (\bP_\bA)^2} = \frac 1 {\sqrt {2\pi}} \int_{-\infty}^\infty \rd \omega \; e^{-\omega^2/2}e^{-\ri \omega \sqrt{2t_r} \bP_\bA} \approx \frac 1 {\sqrt {2\pi}} \delta_\omega \sum_{j=-J}^J e^{-\omega_j^2/2}e^{-\ri \omega_j \sqrt{2t_r} \bP_\bA}  \;,
\end{align}
and a similar approximation for $ e^{-t_r (\bP_\bB)^2}$.
Consider first the infinite sum, i.e., 
\begin{align}
\frac 1 {\sqrt{2\pi}} \delta_\omega  \sum_{j=-\infty}^\infty   e^{-\omega_j^2  /2} e^{-\ri \omega_j \sqrt{2t_r}\bP_\bA }      = e^{-t_r (\bP_\bA)^2} +
\sum_{k \ne 0} e^{-(\sqrt{2t_r}\bP_\bA+ 2 \pi k /\delta_\omega)^2/2} \;,
\end{align}
where we used Eq.~\eqref{eq:Diraccomb}. Assume now that $1/\delta_\omega = \Omega(\sqrt{t_R})$, for example, by choosing the constants so that $2 \pi/\delta_\omega \ge 2 \sqrt{2t_R}$, which would be satisfied under the assumptions. Then, by doing so, the last term (which includes the sum over $k\ne 0$) is $\cO(\varepsilon^\kappa)$.
In addition, if we perform the finite sum so that $|j|\le J$, the additional error is $\cO(e^{-\omega_J^2/2})=\cO((\varepsilon/\kappa)^2)$ under the assumptions. Hence, we showed
\begin{align}
    \frac 1 {\sqrt{2\pi}} \delta_\omega  \sum_{j=-J}^J   e^{-\omega_j^2  /2} e^{-\ri \omega_j \sqrt{2t_r}\bP_\bA }
    = e^{-t_r (\bP_\bA)^2} + \cO((\varepsilon/\kappa)^2) + \cO(\varepsilon^\kappa) = e^{-t_r (\bP_\bA)^2} + \cO((\varepsilon/\kappa)^2) \;,
\end{align}
asymptotically.

For the last step we note
\begin{align}
    \bQ h(\bQ) &= \bQ h'(\bQ) + \bQ (h(\bQ)-h'(\bQ)) \\
    & = \one_{N^2} + \cO(\varepsilon) + \bQ (h(\bQ)-h'(\bQ)) \;,
\end{align}
and hence we need to establish an upper for the last term. We have
\begin{align}
 h(\bQ)-h'(\bQ) &=    \delta_t  \sum_{r=0}^{R-1}  \left(\frac 1 {2\pi} (\delta_\omega)^2
    \sum_{j=-J}^J  \sum_{j'=-J}^J e^{-(\omega_j^2 + \omega_{j'}^2)/2} e^{-\ri \omega_j \sqrt{2t_r}\bP_\bA } \otimes e^{-\ri \omega_{j'} \sqrt{2t_r}\bP_\bB } - e^{-t_r \bQ}\right) \\
    & =  \delta_t  \sum_{r=0}^{R-1} \left( (e^{-t_r (\bP_\bA)^2}+\cO((\varepsilon/\kappa))^2) 
    (e^{-t_r (\bP_\bB)^2}+\cO((\varepsilon/\kappa)^2)) 
     - e^{-t_r \bQ}\right) \;,
\end{align}
and we can bound
\begin{align}
   \| h(\bQ)-h'(\bQ) \| =\cO( \delta_t R  (\varepsilon/\kappa)^2) = \cO(\kappa \sqrt{\log(1/\varepsilon)}\varepsilon^2/\kappa^2) = \cO(\varepsilon)
\end{align}
since $\|e^{-t_r (\bP_\bA)^2}\|\le 1$, $\|e^{-t_r (\bP_\bB)^2}\|\le 1$. Then, $\bQ h(\bQ) = \one_{N^2} + \cO(\varepsilon)$ as desired. It follows that we can choose the constants so that $\|\varepsilon(\bQ)\|\le \varepsilon$.
    
\end{proof}

Next we consider the $L_1$ norm of this LCU.

\begin{lemma}
    The $L_1$-norm of $h(\bQ)$ in Lemma~\ref{lem:LCUpositive}
    satisfies
    \begin{align}
        y:= \frac 1 {2\pi} \delta_t (\delta_\omega)^2 \sum_{r=0}^{R-1}
    \sum_{j=-J}^J  \sum_{j'=-J}^J e^{-(\omega_j^2 + \omega_{j'}^2)/2} =\cO(\kappa \log(1/\varepsilon)) \;.
    \end{align}
\end{lemma}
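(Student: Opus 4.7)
The plan is to observe that the expression for $y$ completely factorizes over the three summation indices, reducing the problem to estimating a one-dimensional Gaussian Riemann sum. The sum over $r$ is trivial since the summand is independent of $r$: it contributes $\delta_t R = t_R = c_3 \kappa \log(1/\varepsilon)$ exactly (up to the ceiling in $R$). What remains is to show that the double sum over $j,j'$ contributes only an $\cO(1)$ prefactor.

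Next, I would factor the double sum as the square of a single sum:
\begin{align}
\frac{1}{2\pi}(\delta_\omega)^2 \sum_{j=-J}^J \sum_{j'=-J}^J e^{-(\omega_j^2+\omega_{j'}^2)/2}
 = \left(\frac{\delta_\omega}{\sqrt{2\pi}} \sum_{j=-J}^J e^{-\omega_j^2/2}\right)^2,
\end{align}
and then bound the inner single sum using exactly the same two-step argument used in the proofs of Lemma~\ref{lem:LCUnormal} and Lemma~\ref{lem:LCUB=0}. First, I would extend the sum to $j\in\mathbb Z$, incurring a tail error
\begin{align}
\frac{\delta_\omega}{\sqrt{2\pi}} \sum_{|j|>J} e^{-\omega_j^2/2} = \cO(e^{-\omega_J^2/2}) = \cO(\varepsilon/\kappa),
\end{align}
which follows from the standard Gaussian tail bound and the choice $\omega_J = c_4\sqrt{\log(\kappa/\varepsilon)}$. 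Second, I would apply the Poisson summation formula in the form of the Dirac comb identity \eqref{eq:Diraccomb}, yielding
\begin{align}
\frac{\delta_\omega}{\sqrt{2\pi}} \sum_{j=-\infty}^\infty e^{-\omega_j^2/2} = \sum_{k=-\infty}^\infty e^{-(2\pi k/\delta_\omega)^2/2} = 1 + 2\sum_{k\ge 1} e^{-(2\pi k/\delta_\omega)^2/2}.
\end{align}
With $\delta_\omega = c_2/\sqrt{\kappa\log(1/\varepsilon)}$, one has $(2\pi/\delta_\omega)^2 = \Theta(\kappa \log(1/\varepsilon))$, so the $k\ne 0$ tail is $\cO(\varepsilon^{\kappa})$, which is well within $\cO(\varepsilon)$.

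Combining the two estimates shows that $\frac{\delta_\omega}{\sqrt{2\pi}}\sum_{j=-J}^J e^{-\omega_j^2/2} = 1 + \cO(\varepsilon)$, and squaring preserves this, giving $\cO(1)$. Multiplying by the earlier contribution $t_R$ produces $y = t_R(1+\cO(\varepsilon)) = \cO(\kappa \log(1/\varepsilon))$, as claimed. There is no real obstacle: this is a routine calculation that reuses, verbatim, the Poisson-summation-plus-Gaussian-tail recipe already established in the earlier lemmas of this appendix, and the only bookkeeping is checking that the parameter choices in Lemma~\ref{lem:LCUpositive} are compatible with the constraints $2\pi/\delta_\omega \gg 1$ and $\omega_J \gg 1$ needed for both error bounds to be $\cO(\varepsilon)$.
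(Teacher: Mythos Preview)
Your proposal is correct and follows essentially the same approach as the paper: sum over $r$ to extract the factor $t_R$, then use the Poisson summation identity (Eq.~\eqref{eq:discreteGaussiansum}) on the Gaussian sum to show the remaining double sum is $1+\cO(\varepsilon)$. The paper's proof is terser (it quotes the infinite-sum result directly and leaves the factorization implicit), but the logic and the ingredients are identical.
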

\begin{proof}
    Note that
    \begin{align}
        y =\frac 1 {2\pi} t_R (\delta_\omega)^2\sum_{j=-J}^J  \sum_{j'=-J}^J e^{-(\omega_j^2 + \omega_{j'}^2)/2} \;.
    \end{align}
    In addition we can use Eq.~\eqref{eq:discreteGaussiansum} to show,  for example,
    \begin{align}
     \frac 1 {\sqrt{2\pi}}   \delta_\omega \sum_{j=-\infty}^\infty
     e^{-\omega_j^2/2} = 1 + \cO(\varepsilon) \;.
    \end{align}
    Then, $y \approx t_R = \cO(\kappa \log(1/\varepsilon))$
    as stated.
\end{proof}

We are ready to determine the complexities.
Applying the inverse of vectorization, our approximate solution is
\begin{align}
    \bX \approx \bX':= \frac 1 {2\pi} \delta_t (\delta_\omega)^2 \sum_{r=0}^{R-1}
    \sum_{j=-J}^J  \sum_{j'=-J}^J e^{-(\omega_j^2 + \omega_{j'}^2)/2} e^{-\ri \omega_j \sqrt{2t_r}\bP_\bA } \bC e^{-\ri \omega_{j'} \sqrt{2t_r}\bP_\bB } \;.
\end{align}
If we replace $\bC$ by its block-encoding $U_{\bC/\alpha}$, then again this is an LCU involving Hamiltonian evolutions. Our quantum algorithm prepares the block-encoding $\bX'/x$, where $x:=\alpha y=\cO(\alpha \kappa \log(1/\varepsilon)$
for this case. We can invoke Lemma~\ref{lem:frobeniuserror} to set $\varepsilon=\cO(\epsilon/\sqrt N)$ for overall additive error $\cO(\epsilon)$ in the approximation $\bX'/x$. This implies $x =\cO(\alpha \kappa \log(N/\epsilon))$.

The query complexity is also given by the largest evolution time, which in this case is linear in $\sqrt{2t_R} \omega_J=\cO(\sqrt \kappa \log(\kappa /\varepsilon))$. Using QSP within error $\cO(\epsilon)$ due to Lemma~\ref{lem:evolutionerror}, the query complexity is
\begin{align}
    Q = \cO(\sqrt \kappa \log(\kappa /\varepsilon) + \log(1/\epsilon)) = \cO(\sqrt \kappa \log(\kappa N /\epsilon))\;. 
\end{align}

For the gate complexity we need to account for the gates to prepare the ancillary register. Note that we can write
\begin{align}
    U_{\bX/x}=\frac 1 x \sum_i x_i V_i U_{\bC/\alpha} W_i
\end{align}
for some unitaries $V_i$ and $W_i$ that are the time evolutions. We are interested in the operation
\begin{align}
    V\ket 0 \mapsto \frac 1 {\sqrt x}\sum_i \sqrt{x_i}\ket i
\end{align}
where the ancillary register is composed of three registers, one to encode the times $t_r$, another for the frequencies $\omega_j$, and another for the frequencies $\omega_{j'}$. The time register is in uniform superposition and can be prepared with $\log_2 R$ Hadamard gates. The frequency registers are in states with Gaussian-like amplitudes and can be prepared with $\cO(\log(J/\epsilon))$ gates within additive error $\cO(\epsilon)$. Hence, the overall gate complexity is
\begin{align}
    G= \cO(Q + \log (RJ/\epsilon))= \cO(Q + \log(\kappa N/\epsilon))=\cO(\sqrt \kappa \log(\kappa N/\epsilon))\;.
\end{align}
The term $Q$ is due to the gates for QSP.

\section{Oracle separations}
\label{app:separations}

We provide the proofs to the results on oracle separations in Sec.~\ref{sec:separations}.

\subsection{Proof of Thm.~\ref{thm:separation1}}

    The first part is trivial: prepare $\ket 0$, apply $\bC$, and measure in the computational basis. The decision problem is solved by observing whether the output is $\ket 0$ or a state orthogonal to it.

    For the second part we note that
    \begin{align}
        \dket C = \frac 1 {\sqrt N} \sum_{j,k \in {\cal V}} \ket{j,k}
    \end{align}
    where ${\cal V}$ is the set of pairs $(j,k)$ corresponding to the 1's in the permutation matrix $\bC$. The two cases are determined by whether $(0,0)$ is in ${\cal V}$ or not. 

   We can consider two distinct instances, one in which $(0,0)$ is in ${\cal V}$  and one in which 
   $(0,0)$ is not in ${\cal V}$, by permuting, for example, the first two columns of $\bC$. For these two instances we satisfy
   \begin{align}
        \| \dket C - \dket {C'} \| =\frac 4 {\sqrt N}\;.
    \end{align}
    These two states can then be prepared with two distinct black-box unitaries that satisfy
     \begin{align}
        \| U_{\bC} - U_{{\bf C'}} \|    =\cO( 1/\sqrt N)\;.
    \end{align}
     Using an adversarial argument, any quantum circuit 
    that distinguishes these instances and then solves the decision problem with high probability requires at least $\Omega(\sqrt N)$ uses to these unitaries, including their inverses and controlled versions.

\subsection{Proof of Thm.~\ref{thm:separation2}}

One implication is trivial: to construct $ V_\bC$ we first note that with on query to $U_{\bC/\alpha}$ we can implement the map
\begin{align}
    \ket{k,0} \ket 0 \mapsto \ket{k,k}\ket 0 \mapsto \frac 1 {\alpha}\ket{k,C_k} \ket 0 + \ket{k^\perp} \;,
\end{align} 
where $\ket{k^\perp}$ is orthogonal to $\ket 0$.
Note that the norm of $\ket{C_k}$ is either $1/\sqrt 2$ or 1, so to obtain the normalized state $\ket{k,C_k}/\|\ket{C_k}\|$
the last step can be done with, for example, amplitude amplification. In particular, 
since $\alpha$ is known, amplitude amplification can be implemented exactly in this case. Then, we effectively constructed  
the map $\ket{k,0}\mapsto \ket{k,C_k}/\|\ket{C_k}\|$, or the unitary $V_{\bf C}$, with constant queries to $U_{\bC/\alpha}$ and the inverse as required by amplitude amplification.

For the other implication we consider  a reduction from search~\cite{Gro96}. In particular, we consider the set of $\bC$'s whose columns are 1-sparse, their first entry is $1/\sqrt 2$, and there is a `marked' column $\sigma \in \{0,\ldots,N-1\}$ and $\sigma \ne 0$ whose first row contains also an entry 
$1/\sqrt 2$. All other entries are 1 and located over the main diagonal. We can specify these instances using a function $f:\{0,\ldots,N-1\} \mapsto \{0,1\}$ such that $f(\sigma)=1$,
and $f(k)=0$ otherwise. The goal is to search for the unknown $\sigma$. 
This can be done using the (inverse) block-encoding for $\bC$,   since
\begin{align}
    U_\bC^{-1} \ket 0 = \frac 1 {\sqrt{2}} (\ket 0 + \ket {\sigma})\;.
\end{align}
Note that $\|\bC\|=1$ so that $\alpha=1$ here.
A measurement outputs $\sigma$ with probability $1/2$.

Now consider the unitary $V_\bC$ that performs the map $\ket{k,0} \mapsto \ket{k,C_k}/\|\ket{C_k}\|$. This unitary can be implemented with a single oracle call to $f$, since $\ket{C_k}/\|\ket{C_k}\|=\ket k$ when $k \ne \sigma$, and $\ket{C_\sigma}/\|\ket{C_\sigma}\|=\ket 0$.
It is known that finding $\sigma$ with high probability requires $\Omega(\sqrt N)$ calls to the oracle to $f$~\cite{BHMT02}. Hence, implementing $U_\bC$ (or its inverse), which would allow us to solve search with high probability, requires $\Omega(\sqrt N)$ calls to $V_\bC$, its inverse or controlled-version.

\section{More general solution}\label{app:nonpos}
It is possible to generalize the solution to the case of matrices that need not be positive or normal, but the approach yields exponential complexity in the condition number $\kappa$, which is undesired.
The inverse function can be approximated using polynomials within error $\epsilon$ as in Ref.~\cite{CKS17}
\begin{equation}\label{eq:CKS}
    \frac 1{{\bf Q}} \approx 4\sum_{j=0}^{j_0} (-1)^j \left[ \frac{\sum_{i=j+1}^b \binom{2b}{b+i}}{2^{2b}} \right]T_{2j+1}({\bf Q})\, ,
\end{equation}
where $T_{2j+1}$ are Chebyshev polynomials,
\begin{equation}
    j_0 = \left\lceil \sqrt{b\log(4b/\epsilon)}\right\rceil \, , \qquad b = \lceil \kappa^2\log(\kappa/\epsilon) \rceil \, ,
\end{equation}
and $\kappa$ is the condition number of ${\bf Q}$.
Reference \cite{CKS17} considered Hermitian matrices, but Eq.~\eqref{eq:CKS} holds for more general ${\bf Q}$, if we interpret odd powers as alternating ${\bf Q}$ and its Hermitian conjugate.

In a linear-combination approach to approximating the solution of linear equations \cite{CKS17}, the solution is efficient because the coefficients in the square brackets in Eq.~\eqref{eq:CKS} are less than 1, and the Chebyshev polynomials can be generated using steps of oblivious amplitude amplification.
Here, the problem is more difficult, because we aim to provide the solution using block encodings of ${\bf A}$ and ${\bf B}$, and these are to be applied before and after ${\bf C}$.
In this case oblivious amplitude amplification does not provide Chebyshev polynomials jointly in ${\bf A}$ and ${\bf B}$.

It is possible to expand the Chebyshev polynomials and provide the solution to the Sylvester equation using products of ${\bf A}$, ${\bf A}^\dagger$, ${\bf B}$ and ${\bf B}^\dagger$.
For each power of ${\bf Q}$ it can be translated to operations on ${\bf C}$ as, for the first few cases
\begin{align}
    {\bf Q}^\dagger &\mapsto {\bf A}^\dagger {\bf C} + {\bf C}{\bf B}^\dagger \, , \\
    {\bf Q}{\bf Q}^\dagger &\mapsto {\bf A} [{\bf A}^\dagger {\bf C} + {\bf C}{\bf B}^\dagger] + [{\bf A}^\dagger {\bf C} + {\bf C}{\bf B}^\dagger]{\bf B} \, , \\
    {\bf Q}^\dagger{\bf Q}{\bf Q}^\dagger &\mapsto {\bf A}^\dagger\left\{{\bf A} [{\bf A}^\dagger {\bf C} + {\bf C}{\bf B}^\dagger] + [{\bf A}^\dagger {\bf C} + {\bf C}{\bf B}^\dagger]{\bf B}\right\} + \left\{{\bf A} [{\bf A}^\dagger {\bf C} + {\bf C}{\bf B}^\dagger] + [{\bf A}^\dagger {\bf C} + {\bf C}{\bf B}^\dagger]{\bf B}\right\}{\bf B}^\dagger \, .
\end{align}
In this way the description of the solution in terms of Chebyshev polynomials in ${\bf Q}$ can be translated into a sum of powers of ${\bf Q}$, then each power of ${\bf Q}$ can be translated into a form that can be block encoded.

The problem with this approach is that it gives a block-encoding with a complexity that increases exponentially.
Expanding the Chebyshev polynomials into a sum of powers gives
\begin{equation}
    \sum_{k=0}^{j_0} \xi_k {\bf Q}^\dagger({\bf Q} {\bf Q}^\dagger)^{k} \, ,
\end{equation}
where
\begin{equation}
    \xi_k = 4\sum_{j=k}^{j_0} (-1)^{j+k} \frac{\sum_{i=j+1}^b \binom{2b}{b+i}}{2^{2b}}T_{2j+1,2k+1} \, ,
\end{equation}
with $T_{2j+1,2k+1}$ the coefficient of $x^{2k+1}$ for the Chebyshev polynomial $T_{2j+1}(x)$.
The important feature of the coefficients of the Chebyshev polynomials is that
\begin{equation}
    \sum_{k=0}^j |T_{2j+1,2k+1}| = 2^{2j+1}-1 \, .
\end{equation}
This means that we can only guarantee the upper bound
\begin{equation}
    \sum_{k=0}^{j_0} |\xi_k| \le 4 \sum_{j=0}^{j_0} (2^{2j+1}-1) < \frac 83 2^{2j_0+2} \, .
\end{equation}
There is some cancellation which is not accounted for in the upper bound, but numerically it is found that it is still exponential.

Solution of the Sylvester equation by block-encoding gives complexity proportional to the sum of absolute values of coefficients.
Because $j_0\in \widetilde{\mathcal{O}}(\kappa)$, the complexity is exponential in $\kappa$.
Similarly, since $\log(1/\epsilon)$ appears in the exponential, the logarithmic scaling of the complexity with $\epsilon$ is lost as well.
Although this method of solution has poor complexity scaling with $\kappa$ and $\epsilon$, it is still logarithmic in the dimension of the matrices, and does not make any assumption on positivity or the matrices being normal.

\end{document}